\theoremstyle{definition}
\newtheorem{definition}{Definition}
\theoremstyle{plain}
\newtheorem{lemma}{Lemma}
\theoremstyle{plain}
\newtheorem{theorem}{Theorem}
\theoremstyle{plain}
\newtheorem{proposition}{Proposition}
\theoremstyle{definition}
\newtheorem{example}{Example}
\theoremstyle{definition}
\theoremstyle{plain}
\newtheorem{corollary}{Corollary}
\newtheorem{remark}{Remark}
\newcommand{\sst}{\, | \,} 
\newcommand{\alloc}{\mathcal{A}}
\newcommand{\salloc}{\mathcal{S}}
\newcommand{\pp}{\mathscr{P}}
\newcommand{\obj}{\mathcal{O}}
\DeclareMathOperator*{\argmax}{arg\,max}
\title{Incentives and Efficiency in Constrained Allocation Mechanisms\thanks{We thank Samson Alva, Simon Board, Kim Border, Ben Brooks, Haluk Ergin, Satoshi Fukuda, Thomas Gresik, Yuhta Ishii,  Yuichiro Kamada, Timothy Kehoe, Rohit Lamba, Jacob Leshno, Jay Lu, Delong Meng, Moritz Meyer-ter-Vehn, Mich\`{e}le M\"{u}ller, Roger Myerson, Farzad Pourbabaee, Wenfeng Qiu, Doron Ravid, Phil Reny, Tomasz Sadzik, Chris Shannon, David Rahman, Ron Siegel, Ran Shorrer, Hugo Sonnenschein, Alexander Westkamp, Bill Zame and various seminar participants for helpful feedback.}}
\author{Joseph Root\thanks{Department of Economics, University of Chicago, Email: jroot@uchicago.edu} \and David S.\ Ahn\thanks{Olin Business School, Washington University in St.\ Louis. Email: ahnd@wustl.edu}}
\date{\date{\today}}
\begin{document}

\maketitle

\begin{abstract}
We study private-good allocation under general constraints. Several prominent examples are special cases, including house allocation, roommate matching, social choice, and multiple assignment. Every individually strategy-proof and Pareto efficient two-agent mechanism is a ``local dictatorship.’’ Every group strategy-proof $N$-agent mechanism has two-agent marginal mechanisms that are local dictatorships. These results yield new characterizations and unifying insights for known characterizations. We find all group strategy-proof and Pareto efficient mechanisms for the roommates problem. We give a related result for multiple assignment. We prove the Gibbard--Satterthwaite Theorem and give a partial converse. We also apply our characterization to task allocation and network regulation problems.
\end{abstract}

\section{Introduction}

Market design often involves constraints. School choice assignments must meet quotas for underrepresented students at high-performing schools. Medical residency assignments must place enough doctors in rural areas. The allocation of radio frequency in spectrum auctions must satisfy a variety of complicated engineering conditions to minimize cross-channel interference. 

Although successful approaches have been tailored for particular constraints in specific problems, to date there is little general understanding of how constraints affect the two classic economic considerations of efficiency and incentives. Theoretically, a unified approach would enable analytical insights to be shared between contexts. Practically, a flexible theory of constraints for market design would expand applicability. Real-world problems involve many ad hoc considerations that are difficult to anticipate. The tools of market design should be general enough to accommodate such constraints.

We study object allocation with private values for completely general constraints. A finite number of objects are allocated to a finite number of agents and an arbitrary constraint limits the set of feasible social allocations. Each agent has strict preferences over the objects assigned to her, but is indifferent to others' assignments. 

While other agents' assignments have no direct effect on one's well-being, others' assignments do limit the profiles of allocations that are jointly feasible. Obviously, the assignment of a house to one agent precludes another agents' consumption of that same house. Even with purely private values, constraints introduce linkage across agents' allocations. Each agent $i$ is indirectly concerned with any other $j$'s assignment, not because $i$ cares about $j$'s consumption, but rather because $j$'s assignment will limit the set of objects for $i$ that are jointly feasible with $j$'s assignment. Our goal is to study the set of incentive compatible and efficient mechanisms for a fixed, but arbitrary constraint. We study how different features of a constraint can make it amenable for implementation. For any constraint on the set of feasible allocations, our findings characterize the class of mechanisms that are immune to manipulation by any group of agents yet still yield Pareto efficient outcomes.

We start by considering two-agent environments. This case admits a surprisingly parsimonious characterization of the set of individually strategy-proof and Pareto efficient mechanisms for all constraints. We show that all such mechanisms are ``local dictatorships'' where the set of infeasible allocations is partitioned into two regions and each region is assigned a local dictator. For a given preference profile, the agents' top choices determine some (possibly infeasible) social allocation. If this allocation is feasible, then there is no conflict of interest and the mechanism assigns each agent their favorite object. Otherwise, the favorite allocation is infeasible, and a local dictator is empowered at that infeasible allocation. The dictator is assigned their top object and the non-dictator is assigned their favorite object compatible with the dictator's top object. However, not all dictatorship partitions will be strategy-proof. Instead, some structure is required of the partition to ensure these desiderata are maintained. If two infeasible allocations agree on either coordinate, that is, if they assign either agent the same object, then they must share the same local dictator. We call local dictatorship mechanisms satisfying this requirement, ``adapted local dictatorships."

With three or more agents, the set of individually strategy-proof and Pareto efficient mechanisms no longer admits a tidy characterization. Even for the classical house allocation problem, the full collection of such mechanisms remains unknown. Nevertheless, substantial progress can be made by strengthening the incentive requirement to group strategy-proofness, which requires that no coalition of agents can jointly misreport their preferences to obtain a strictly better outcome for all its members.    

Group strategy-proofness has several conceptual and analytical advantages over individual strategy-proofness. First, it is equivalent to Maskin monotonicity, a normatively appealing condition stating that if a particular allocation is chosen at some preference profile, it should remain chosen whenever every agent ranks that allocation weakly higher. Second, the additional restriction imposed by group strategy-proofness is modest: it merely rules out bossiness, situations in which an agent can alter others’ allocations without changing her own. Finally, by a result of \citeasnoun{Alva17}, though group strategy-proofness rules out any coalitional misreport, it is sufficient to rule out misreports by individuals and pairs.

This equivalence allows us to leverage our two-agent analysis. Fixing any two agents and any preference profile of the remaining agents defines a marginal constrained two-agent allocation problem. A mechanism is group strategy-proof if and only if each of these marginal mechanisms is an adapted local dictatorship. This is still not a closed-form characterization of all admissible mechanisms, since it requires a careful inspection of derived marginal mechanisms. While it is not immediately evident how these pairwise characterizations combine into a global mechanism, the finding nonetheless provides a useful method of attacking general problems. By examining the structure of the two-agent marginals, we can identify the class of group strategy-proof and Pareto efficient mechanisms for a wide range of canonical problems, as we demonstrate in specific applications.

Our study of constrained allocation yields some surprising theoretical insights. Several prominent problems which, at first glance may appear unconstrained and unrelated, can be neatly expressed as special constraints of our model. For example, the classical social choice problem corresponds to the constraint where all agents are required to consume the same object.\footnote{The term ``object" is figurative. In social choice, the objects are usually policy choices or political candidates.} From this perspective, the social choice problem is a special constrained private-goods allocation problem. A simple application of our results gives the Gibbard--Satterthwaite Theorem: that all strategy-proof social choice mechanisms are dictatorial.\footnote{For the social choice constraint, a strategy-proof mechanism is automatically group strategy-proof.} We can also formulate and prove an inverse to Gibbard--Satterthwaite: we give general conditions on the constraint which guarantee that there are non dictatorial mechanisms. We get a possibility result for two-sided matching and college assignment as immediate corollaries. 

Another prominent case of our theory is house allocation, where a finite number of indivisible objects must be assigned to agents with unit-demand. Expressed this way, the house allocation problem is almost the opposite of the social choice problem: no two agents can be assigned the same object. \citeasnoun{PyUn17} provided a full characterization of the group strategy-proof and Pareto efficient house allocation mechanisms. They showed that all such mechanisms are variants of the hierarchical exchange mechanisms of \citeasnoun{Papai00} where some agents can ``broker" objects and when exactly three agents remain, a ``braid" can form \cite{Bade16}. We generalize this problem and consider the case of combinatorial assignment where agents can be assigned bundles of up to $k$ goods. House allocation is the special case where $k=1$. We show that for $k\geq2$ the theory collapses and the only group strategy-proof and Pareto efficient mechanisms are sequential dictatorships.\footnote{Sequential dictatorship is a simple variant on the well-known serial dictatorship mechanism. In serial dictatorship, agents are called in a pre-specified order to choose their favorite object compatible with the choices of earlier dictators. In sequential dictatorship, the ordering of dictators can be endogenous to the choices of earlier dictators.} 

A third prominent problem that can be expressed as a constraint is the roommates problem, where a number of agents need to be matched into pairs. In this case, the ``objects'' are the other agents and the constraint requires that: first, no agent is matched to herself; and second, if $i$ is assigned to $j$, then $j$ is also assigned to $i$. We use our results to demonstrate that all group strategy-proof and Pareto efficient roommate mechanisms are sequential dictatorships. 

We also consider a task allocation problem. A finite number of tasks need to be allocated to a set of workers. Workers can be assigned to multiple tasks and tasks can be assigned to multiple workers. The constraint is that each task must be assigned to at least one worker. In this case, all group strategy-proof and efficient mechanisms are what we call ``unilateral." When the workers top-rank an infeasible allocation, that is, an allocation where not all tasks are assigned, a single worker is required to take-on all remaining tasks. However, this worker can be highly dependent on the reported preferences.

These examples illustrate a key conceptual contribution of our paper: to provide a novel framework unifying positive and negative results across these applications, tying together seemingly disparate environments by viewing them as different constraints on the image rather than through restrictions of preferences on the domain. Traditionally, positive results in specific environments are seen as escaping the impossibilities of Arrovian social choice by restricting preferences in the {\em domain} of the mechanism to convenient special cases, such as assuming single-peaked rankings or quasi-linear preferences. In contrast, our model can provide a different reconciliation of these positive results by interpreting these environments as relaxing constraints in the {\em image} of the mechanism: outside of the restrictive social choice constraint, all agents need not consume the same object and instead there is room for compromise to yield mechanisms beyond dictatorship. That is, our model illuminates that the  ``diagonal'' constraint implicit in the social choice problem generates maximal tension between efficiency and incentives, while other constraints allow more scope for their coexistence.

\subsection{Literature Review}

To our knowledge, this paper is the first to study the entire set of mechanisms that satisfy criteria regarding incentives and efficiency for general constraints. However, we mentioned that several canonical problems can be parameterized as specific constraints in our model, so we first review existing findings for these problems. One example is the house allocation problem, where no two agents can share the same object. The two famous group strategy-proof and Pareto efficient mechanisms for house allocation are serial dictatorship and top trading cycles, attributed to David Gale by \citeasnoun{ShSc74} and shown to have these features by \citeasnoun{Bird84}.\footnote{Serial dictatorship is analyzed comprehensively by \citeasnoun{Svensson94} and \citeasnoun{Svensson99}.} \citeasnoun{AbSo99} and \citeasnoun{Papai00} construct additional classes of group strategy-proof and Pareto efficient mechanisms that include mechanisms beyond top trading cycles and serial dictatorship. However, a complete characterization was outstanding until \citeasnoun{PyUn17} provided an impressive full description of all group strategy-proof and Pareto efficient mechanisms. \citeasnoun{PyUn17} showed that this class includes the hierarchical exchange mechanisms of \citeasnoun{Papai00}, the ``braid" mechanisms of \citeasnoun{Bade16} and a new variant of these mechanisms with ``brokers."  Our paper was inspired in part by the possibility of attaining a characterization for such an important problem.

The roommates, or one-sided matching, problem can be viewed as a further restriction on the house allocation constraint. Here, the objects are the agents, and the allocation must satisfy the constraint that if $j$ is allocated to $i$, then $i$ is also allocated to $j$. While incentives and efficiency are  well-understood for house allocation, similar insights for one-sided matching were yet unknown. This is in large part because the roommates problem may have no stable outcomes, as originally observed in the famous paper by  \citeasnoun{GaSh62}. Since then, a voluminous literature in operations research and computer science, starting with \citeasnoun{Irving85}, constructs efficient algorithms to find stable matchings when they exist, and the study of stability for one-sided matching is now well-known as the ``stable roommates problem.'' In contrast, there is little discussion of incentives and efficiency for the roommates problem.\footnote{An exception is the working paper by \citeasnoun{AbMa04} that studies the computational hardness of finding Pareto optimal matches for the roommates problem.} As an application of our main results, we find that all group strategy-proof and Pareto efficient mechanisms for the roommates problem are sequential dictatorships. To our knowledge, this observation is novel and establishes a result for one-sided matching akin to the characterization of Gibbard and Satterthwaite for social choice or that of \citeasnoun{PyUn17} for house allocation. 

A relaxation of house allocation allows agents to own multiple houses, which is sometimes called the combinatorial or multiple assignment problem. Our main finding for this problem is that sequential dictatorship is the unique group strategy-proof and Pareto efficient mechanism for the environment where each agent can be assigned at most $k\geq 2$ objects. This mirrors related results for multiple assignment. To our knowledge, \cite{Papai01} provided the first such result. She proves this for the case where $k$ is maximal, that is equal to the number of objects, while we consider the general case. \citeasnoun{Hatfield09} provides this characterization for the case where $k$ is a lower bound on the number of objects, as opposed to the upper bound we consider here.

A last important constraint is the classic Arrovian social choice model. The celebrated theorem of  \citeasnoun{Gibbard73} and \citeasnoun{Satterthwaite75} initiated the field of implementation theory by observing the tension between incentives and efficiency for social choice. In our model, the Arrovian social choice corresponds to the case where all agents must be assigned a common outcome. We derive the Gibbard-Satterthwaite Theorem as a corollary of our main characterization. This provides a new perspective on the classic result by considering social choice environments as a severe constraint in the allocation problem with purely private values.

Our general environment with private goods was examined by \citeasnoun{BaBeMo16} for social choice. They study implementation across different restrictions on the {\em domain} of preference. In contrast, we always consider the full domain of preferences. We complement the insights of \citeasnoun{BaBeMo16} by considering different constraints on the {\em image} of allocations that are feasible for a mechanism. Related, a series of papers consider constraints in social choice problems where the common consumption space is a product space \cite{BaMaNe97,BaMaSe98,BaMaNe05}, adding feasibility considerations to the general model of \citeasnoun{BoJo83}. These papers also connect the structure of the feasible set to the space of mechanisms satisfying various normative axioms, so our exercise of connecting constraints to mechanisms has precedents. An important difference is that these papers allow preferences to depend on the entire allocation profile, rather than the pure private values we assume in this paper.

\citeasnoun{Sonmez99} observed that problems such as two-sided matching, house allocation, and the exchange of indivisible resources can all be formulated as allocation problems by limiting the set of allocations. He employs this framework to study environments with initial endowments, relating the existence of individually rational, Pareto efficient, and strategy-proof mechanisms to properties of the core. By contrast, our setting does not feature endowments, so notions such as individual rationality or the core cannot be meaningfully defined. In this sense, our results are complementary: they apply to collective ownership environments, where the set of objects can be allocated in any feasible manner.

Finally, in contemporaneous work \citeasnoun{Meng19} proved a theorem for social choice with exogenous indifference classes which is mathematically equivalent to our two-agent characterization. While indifference classes and constraints can be mapped to each other in a mathematical sense, our aims are different and the contributions of the two papers are independent. \citeasnoun{Meng19} does not observe the usefulness of the result for constrained market design, while we did not observe its usefulness for problems with indifferences. We compare the technical results more specifically when we introduce our two-agent characterization. 

\section{Preview with Two Agents} \label{sec: two agents introduction}

We begin by previewing the scope of our results for the two-agent case. Each agent $i = 1,2$ must be allocated some object $a_i \in \obj$, where $\obj$ is a set of objects. This case is particularly simple to present because the space of allocations $\obj \times \obj$ is a square that can be easily visualized and our results are the simplest for this case, while still being rich enough to capture a variety of interesting problems. Only a subset of these allocations are feasible, and denote the subset of feasible allocations by $C \subseteq \obj \times \obj$. Let $\bar{C} = \obj \times \obj \setminus{C}$ denote the set of infeasible allocations. For now, we assume, for convenience, that the constraint is rich enough so that for every object $x \in \obj$, there is some feasible allocation where $a_i = x$.\footnote{This assumption is not necessary and is dropped in the main model.} 

As mentioned in the Introduction, several canonical implementation problems for two agents can be expressed as specific constraints on allocations. In addition, we can consider some interesting novel problems. The related constraints can all be visualized in the grid of allocations.

\begin{figure}[h]
\centering
\includegraphics[scale=0.6]{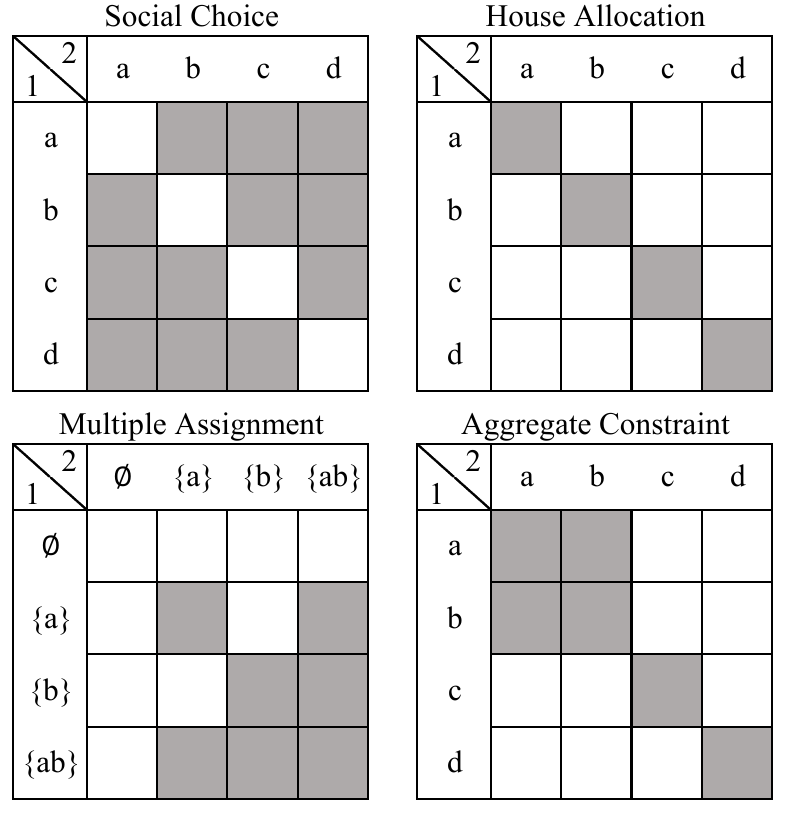}
\caption{Four two-agent examples. The infeasible allocations are shaded gray. Agent $1$'s allocation is determined by the row and $2$'s allocation is determined by the column.}
\label{two-agent examples}
\end{figure}

\begin{itemize}
    \item Social choice: Arrovian social choice decides a single social outcome for all agents, like whether to build a bridge or which candidate to elect. In our language, this means that agents share a common object. This is equivalent to the constraint $C = \{(a_1, a_2) \in \obj \times \obj: a_1 = a_2\}$. Visualizing all allocations as a square, social choice requires that the allocation live on the diagonal. The first panel in Figure \ref{two-agent examples} illustrates, where the feasible diagonal is in light squares, while the infeasible allocations are shaded by dark squares.
    \item House allocation: Here objects are houses. Each agent is assigned a house, and a house can be owned by at most one agent. The corresponding constraint is $C = \{(a_1, a_2) \in \obj \times \obj: a_1 \neq a_2\}$. With two agents, the house allocation constraint is the complement of the social choice constraint. This is shown in the Figure \ref{two-agent examples}, where in this case the only infeasible allocations are on the diagonal. 
    \item Multiple assignment: This problem extends house allocation by allowing each agent to own multiple objects. Letting $H$ be the set of houses, set $\obj = 2^H$. Each agent is allocated a set of houses $s_i \subseteq H$, and each house can be allocated to at most one agent. Then $C = \{(s_1, s_2) \in \obj \times \obj : s_1 \cap s_2 = \emptyset \}$.\footnote{In the general multiple assignment problem, additional structure, like a provision that no agent own more than $k$ houses, can be imposed. For now, we consider the simpler problem without this additional structure to more cleanly explain the two-agent intuition.} Figure \ref{two-agent examples} illustrates where now the objects are subsets of two houses $a,b$ rather than individual houses. For example, the allocation $(\{a\}, \{a,b\})$ is infeasible because both agents own $a$, and is thus part of the dark infeasible area.
    \item Allocation with aggregate constraints: This is a variation of (single-object) house allocation. Suppose that some agent must be allocated an object from a subset $\mathcal{Q} \subseteq \obj$. This is captured by the constraint $C = \{(a_1, a_2) \in \obj \times \obj: a_1 \neq a_2 \mbox{ and } \{a_1, a_2\} \cap \mathcal{Q} \neq \emptyset \}$. For example, the agents could be an economics department and a business school recruiting from the same pool of candidates $\obj$. However, the provost does not want both schools to hire a senior candidate for budgetary reasons. Then the set $\mathcal{Q}$ is junior candidates, since at least one division must hire a junior professor. Another interpretation is that objects are residency assignments for medical students and the set $\mathcal{Q}$ is hospitals in an underserved rural area. This is a classic design consideration, where one resident must be allocated to a hospital in that rural area. The case where $\mathcal{Q} = \{c,d\}$ is illustrated Figure \ref{two-agent examples}. In addition to the diagonal being infeasible, the allocations $(a,b)$ and $(b,a)$ are also shaded because neither agent receives $c$ or $d$.
\end{itemize}

A mechanism takes the agents' strict preferences over objects $(\succsim_1, \succsim_2)$ and outputs a feasible allocation $f(\succsim_1, \succsim_2) \in C$. Our first observation for two agents will be that every strategy-proof and Pareto efficient mechanism is parameterized with a bi-partition $D : \bar{C} \rightarrow \{1,2\}$ of the infeasible allocations into two disjoint regions, one where agent 1 gets his favorite object and another where agent 2 does. Suppose the top choices for the agents' preferences are $(a_1, a_2)$. If feasible, an efficient mechanism will obviously output this allocation. The nontrivial case is what happens when there is a conflict of interests and the top choices are infeasible, $(a_1, a_2) \notin C$. Then $D(a_1,a_2)$ is a ``local dictator'' when $(a_1,a_2)$ is the pair of top objects. The other agent compromises, and is assigned her favorite choice among those objects that are compatible with the local dictator's top choice. Part of the bite in our result is that the assignment of local dictator can only depend on the top-ranked objects in the preference profile, and is invariant to the ordering of lower-ranked objects. So we can picture every (individually) strategy-proof and Pareto efficient mechanism as an assignment of local dictators to infeasible allocations.\footnote{With two agents, individual strategy-proofness and Pareto efficiency are equivalent to group strategy-proofness when the mechanism has full range.}

For concreteness, Figure \ref{local dictatorship example} illustrates a specific local dictatorship for a different constraint. The dark squares represent the infeasible allocations $\bar{C} = \{ (a,b); (b,a); (b,c); (c,c) ; (d,a)\} $. The local dictator when $(a,b)$ are the top choices is agent 1, $D(a,b) = 1$, and the local dictator at every other infeasible allocation is agent $2$, $D(b,a) = D(b,c) = D(c,c) = D(d,a) = 2$. Consider the preference profile where  $a \succ_1 b \succ_1 c \succ_1 d$ and $b \succ_2 d \succ_2 a \succ_2 c$. Then the top objects are $(a,b)$, which is infeasible. Agent 1 is the local dictator, so agent 2 must move to her top choice that is feasible when $a_1 = a$, which is $2$'s second-favorite object $d$. So the mechanism outputs the allocation $f(\succsim_1, \succsim_2) = (a,d)$ for this profile, moving across the row of top choices to a feasible allocation because agent $1$ gets to keep object $a$. Now consider the preference profile where $c \succ'_1 b \succ'_1 d \succ'_1 a$ and $c \succ'_2 a \succ'_2 b \succ'_2 d$. Then the top objects are $(c,c)$ and $2$ is the local dictator here. So $1$ must compromise. Agent $1$'s next-favorite object is $b$, but that is not feasible with $a_2 = c$ because $(b,c)\notin C$. So Agent 1 most move further down her preference to object $d$, which is jointly feasible with $a_2 = c$. So the mechanism outputs $f(\succ'_1, \succ'_2) = (d,c)$, moving across the column of top choices because agent $2$ gets to keep object $c$ here.

\begin{figure}[h]
\centering
\includegraphics[scale=0.7]{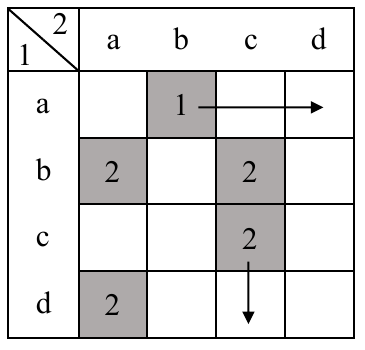}
\caption{Local Dictatorship Example}
\label{local dictatorship example}
\end{figure}

However, there are requirements on the assignment of local dictators, that is, not all local dictatorship assignments yield strategy-proof and Pareto efficient mechanisms. It turns out that to maintain strategy-proofness, the same local dictator must be assigned to any two infeasible allocations that agree on one dimension. Graphically, this means that every column and every row of infeasible allocations must share the same local dictator. We will say such local dictatorship assignments are ``adapted.''  Since equality is transitive, that means any two infeasible allocations that are connected to each other through only horizontal movements through rows or vertical movements through columns must share the same local dictator. However, two infeasible allocations that are not connected in this manner may have different local dictators.

Returning to the example in Figure \ref{local dictatorship example}, this particular assignment is adapted. There are two connected components. One is the single infeasible allocation $\{(a,b)\}$, which is disconnected from the rest of the infeasible space since no other infeasible allocation shares the same row or column. The second component consists of the other four infeasible allocations $\{(b,a); (b,c); (c,c); (d,a)\}$. All of these squares must have the same local dictator. This is because, $(d,a)$ must have the same local dictator as $(b,a)$ since they are on the same column, $D(d,a) = D(b,a)$, which must have the same local dictator as $(b,c)$ on the same row, $D(b,a) = D(b,c)$, which must have the same local dictator as $(c,c)$ on the same column, $D(b,c) = D(c,c)$). In this local dictatorship assignment, $1$ is the local dictator in one component while $2$ is the local dictator in the other. 

This requirement restricts the number of available mechanisms. In particular, if every infeasible allocation can be connected to every other by moving only vertically or horizontally, then the same agent must be the local dictator no matter what the announcements are. In these cases, all strategy-proof and Pareto efficient mechanisms are dictatorial.

For example, revisiting the social choice constraint, the structure of the infeasible set requires that the same agent be assigned to all infeasible allocations. For example, $(a,b)$ and $(c,b)$ are on the same column so must have a common assignment $D(a,b) = D(c,b)$. Similarly, $(c,b)$ and $(c,d)$ are on the same row, so $D(c,b) = D(c,d)$. Then, combining these two restrictions, $(a,b)$ and $(c,d)$ must also share the same dictator. Graphical inspection of the constraint confirms that any two infeasible allocations are connected in this manner. So there are only two strategy-proof and Pareto efficient mechanisms: one where agent $1$ is the local dictator everywhere, and another where agent $2$ is. This means that the only strategy-proof and Pareto efficient mechanisms for social choice are dictatorial. Therefore, an immediate corollary of our two-agent characterization is the two-agent version of the Gibbard--Satterthwaite Theorem.

The only case of a social choice constraint where the infeasible space is not connected in this manner is with two objects, $a$ and $b$. Then the two infeasible allocations are $(a,b)$ and $(b,a)$. In this case there are four admissible mechanisms, the two dictatorships and two non-dictatorial ones. For example, suppose $D(a,b) = 1$ and $D(b,a) = 2$. Under this mechanism, if agent $1$ wants the outcome to be $a$ but agent $2$ wants $b$, then the top choices are $(a,b)$ and $1$ is the local dictator ($D(a,b) = 1$). The only feasible allocation consistent with $1$ getting her favorite object $a$ is $(a,a)$, so the allocation here must be $(a,a)$. On the other hand, if $(b,a)$ are the top objects, then $2$ is the dictator and again the outcome must be $(a,a)$ because $2$ gets her favorite object $a$ as the local dictator. In other words, this local dictatorship assignment is equivalent to the unanimity rule with object $a$ as the default option or status quo in case of a tie. The other non-dictatorial mechanism has the local dictators reversed, which is equivalent to the unanimity rule with object $b$ as the default. So not only does our approach produce the Gibbard--Satterthwaite Theorem, but it is capable of demonstrate what is special about the two outcome case: the constraint is no longer connected. 

Now consider the house allocation constraint. This is opposite of the social choice constraint: no infeasible allocation shares a row or column with any other infeasible allocation, since they all live on the diagonal. Then there are $2^{|\obj|}$ admissible mechanisms, in the case with four objects that is $2^4 = 16$ mechanisms. In fact, the house allocation problem has the maximal number of disconnected components of the infeasible space, so is the constraint that admits the most mechanisms. When $1$ is the local dictator everywhere, this defines serial dictatorship where $1$ receives her favorite house and $2$ gets her second-favorite house in case of a conflict. If local dictatorship is distributed, say $D(a,a) = D(b,b) = 1$ and $D(c,c) = D(d,d) = 2$, this defines a system of implicit property rights, in this case where agent $1$ has claims on houses $a$ and $b$, and agent $2$ has claims houses $c$ and $d$. With two agents, this system of property rights is equivalent to top-trading cycles where the agent's endowments are respectively $\{a,b\}$ and $\{c,d\}$.

Next examine the multiple assignment problem. Here, visual inspection verifies that the entire infeasible space is connected through horizontal and vertical moves through rows and columns. So, as with the case of social choice, the only admissible mechanisms are serial dictatorships. Here, the universal dictator gets her favorite set of objects, and the residual agent receives her favorite subset of the remaining objects unclaimed by the dictator. There is a sharp divergence between single-object allocation which has the maximal number of admissible mechanisms and multiple-object allocation which has the minimal number. Our approach provides a clean graphical intuition for this departure.

Finally, recall the last example of an allocation problem with a simple aggregate constraint, where either $c$ or $d$ must be allocated to some agent. Again by visual inspection, we conclude there are three connected components of the infeasible space: The block $\{ (a,a), (a,b), (b,a) , (b,b)\}$ (in the upper-left of the illustration), the single allocation $\{(c,c)\}$, and the single allocation $\{(d,d)\}$. So for this specific case there are $2^3 = 8$ mechanisms. Much of the intuition for these mechanisms is as for the standard house allocation problem, the issue is that agents' claims must have additional structure to satisfy the aggregate constraint. Specifically, the same agent must have rights over both $a$ and $b$, while in the standard house allocation problem, different agents could have claims on either. 

More important than the exact findings for specific applications, is the framework we introduce that unifies these findings in an omnibus model that yields social choice, house allocation, and multiple assignment as special cases. Each of these is a special constraint, and we can identify the space of mechanisms with good incentive and efficiency properties from the structure of the infeasible allocations. The approach presents a visual connection between the social choice problem and the multiple assignment problem: both have completely connected infeasible spaces, so only admit serial dictatorships. As we will see in the general model, other canonical problems like two-sided matching and the roommates problem that are trivial with two agents can also be expressed as special constraints with more agents.

\section{Model}

We begin by introducing primitives. Let $N$ be a finite set of \textbf{agents} and $\obj$ be a finite set of \textbf{objects}. We use the term ``object'' because our leading examples are allocation problems, but note that $\obj$ need not be physical objects like houses, but can be political candidates, roommates, and so on. Let $\alloc$ denote the set of all allocations of objects to agents. We sometimes write an allocation as a tuple $(a_i)_{i\in N}$ and sometimes as a function $\mu:N\rightarrow \obj$. A \textbf{suballocation} is an allocation to a subset of agents, i.e. a function $\sigma:M\rightarrow \obj$ where $M\subset N$. Let $\salloc$ denote the set of all suballocations.
 
It is possible that not all allocations are feasible, and we denote the (nonempty) feasible set by $C\subset \alloc$. Importantly, the constraint is exogenous to the problem and is given to the mechanism designer as a fixed set of feasible outcomes. Our analysis applies to arbitrary constraints so we can incorporate settings where the set of objects differ for different agents.\footnote{For example, if each agent had their own set of objects $\obj_{i}$, we could let $\obj = \bigcup_i \obj_{i}$ and add the constraint that $i$ be assigned an object in $\obj_{i}$.} 
 
Moving to preferences and types, agents have strict preferences over the \textit{objects}. A preference for agent $i$ will typically be denoted $\succsim_i$ and we will write $x\succsim_i y$ to mean that either $x$ is strictly ranked above $y$ or $x=y$. We assume purely private goods, or selfishness over allocations. That is, the only part of an allocation $(a_i)_{i\in N}$ that matters to agent $j$ is her own allocation $a_j$, and she is indifferent between any two allocations $(a_i)_{i\in N}$ and $(a_i')_{i\in N}$ where $a_j = a'_j$. Thus any other agent's consumption imposes no direct externality on agent $j$. This does not mean there is no conflict of interest in this model. By assuming purely private values, all of the tension in our model flows only through the constraint. That is, the issue is due to limited ``supply'' of objects and not due to direct externalities. 

We will use $P$ to denote the set of strict preferences (or linear orders) on $\obj$  and $\pp=P^{N}$ to denote the set of preference profiles.\footnote{A binary relation $B\subset \obj\times\obj$ is a linear order if it is complete, transitive, and antisymmetric.} Our primary object of interest in this paper is a \textbf{feasible mechanism}, which is simply a function $f:\pp\rightarrow C$. Our task will be to find feasible mechanisms satisfying desirable conditions regarding incentives and efficiency to be introduced below. 

\subsection{Examples and applications}

In Section \ref{sec: two agents introduction}, we gave several examples of constrained allocation problems including the social choice, house allocation, and multiple assignment problems. Here we list some others prominent examples:
\begin{itemize}
\item School Choice: A finite number of students $N$ need to be assigned to one of a finite number of schools $A$. Each school $a$ has capacity $q_a$. One school $\emptyset$ corresponds to the option to remain unmatched and $q_\emptyset = N$. This gives rise to the constraint $$C=\{\mu:N\rightarrow A \sst \vert \mu^{-1}(a)\vert \leq q_a \text{ for all }a\in A  \}.$$ 
\item Roommates Problem: Universities are often tasked with assigning students into shared dormitory rooms. Assuming $N$ is even, this problem can be captured in our environment by setting $\obj=N$  and imposing the constraint 
$$C=\{\mu:N\rightarrow N \sst \mu\circ \mu=id \text{ and } \mu(i)\neq i \text{ for all }i \}$$
where $id$ is the identity map $i\mapsto i$. The first condition requires that if $i$ is assigned roommate $j$ then $j$ is also assigned $i$ and the second condition requires that all agents are assigned a roommate.
\item Two-sided Matching: The set of agents $N$ is composed of two disjoint sets $M$ and $W$ where $\vert M \vert = \vert W\vert$. Agents need to be matched into pairs with the constraint that $m$'s need to be matched with $w$'s. This gives rise to the constraint 
$$C=\{\mu:N\rightarrow N \sst \mu\circ \mu=id \text{ and } \mu(m)\in W \text{ for all }m \}$$


\end{itemize}

\subsection{Notation}

Before moving on, we record here some notation that will be used throughout the paper. For any subset $M\subset N$, given a preference profile $\succsim=\left(\succsim_{i}\right)_{i\in N} \in \mathscr{P}$ and a profile of alternative preferences for agents in $M$, $(\succsim'_{j})_{j\in M}$, we will write $(\succsim'_{M},\succsim_{-M})$ to refer to the profile in which an agent $j$ from $M$ reports $\succsim'_{j}$ and any agent $i$ from $N-M$ reports $\succsim_{i}$. We will often want to consider how a mechanism $f$ changes when a few agents change their preferences, that is the difference between $f(\succsim)$ and $f(\succsim'_{M},\succsim_{-M})$. When the initial preference profile $\succsim$ is clear, we will sometimes write $\succsim_{-}$ instead of $\succsim_{-M}$. For any set of agents $M$, let $\pi_{M}:\alloc \rightarrow \obj$ be the projection map so that given an allocation $(a_{j})_{j\in N}$, we have $\pi_{M} a\coloneqq(a_{j})_{j\in M}$. Given a constraint $C\subset \mathcal{A}$ and a subset of agents $M\subset N$, let $C^{M}=\{\mu:M\rightarrow \mathcal{O} \sst \exists b\in C\text{ s.t. } b_{i}=\mu(i) \, \forall i\in M \}=\pi_M(C)$ which we will call the \textbf{projection of }$C$ \textbf{on} $M$. An element of $C^{M}$ will be referred to as a \textbf{feasible suballocation} for agents in $M$. If $\mu:M\rightarrow \mathcal{O}$ and $\mu':M'\rightarrow \mathcal{O}$ are suballocations with $M\subset M'$ which agree on their shared domain, $\mu'$ is called a \textbf{extension} of $\mu$. If $\mu'$ is a feasible suballocation (which of course implies that $\mu$ is) then $\mu'$ is called a \textbf{feasible extension} of $\mu$. If $\mu'$ assigns an object to each agent, it is called a \textbf{complete extension} of $\mu$. Given a feasible suballocation $\mu$, we will let $C(\mu)$ denote the set of complete and feasible extensions of $\mu$.  

For $x\in \obj$ and $\succsim_{i}\in P$, define $LC_{\succsim_{i}}(x)=\{y\in \obj \sst y\prec_{i} x\}$ be the (strict) \textbf{lower contour set} of $x$ at $\succsim_{i}$. Likewise, $UC_{\succsim_{i}}(x)=\{y\in \obj \sst y\succ_{i} x\}$ is the (strict) \textbf{upper contour set} of $x$ at $\succsim_{i}$. For a preference $\succsim_{i}$, define $\tau_{n}(\succsim_{i})$ as the $n$th top choice under $\succsim_{i}$. Likewise, for any preference profile $\succsim$, define $\tau_{n}(\succsim)$ as the allocation in which each agent gets their $n$th top choice. To save on notation, we will often omit the subscript when referring to the top choice (i.e. writing $\tau(\succsim)$ to mean $\tau_{1}(\succsim)$). We will use $\bar{C}$ to denote the set of infeasible allocations. If $A$ and $B$ are sets of objects and $\succsim\in P$, we say $A\succsim B$ if $a\succsim b$ for all $a\in A$ and $b\in B$. For disjoint sets of objects $A_{1},A_{2}\dots A_{m}$ we will denote $P\left[A_{1},A_{2}\dots A_{m}\right]=\{\succsim\in P \sst A_{1}\succ A_{2} \succ \dots \succ A_{m} \}$ and $P^{\uparrow}\left[A_{1},A_{2}\dots A_{m}\right]=\Big\{\succsim\in P \sst A_{j}\succ \obj \setminus \bigcup_{i=1}^{j}A_{i} \text{ for all }j\Big\}$. In words, $P\left[A_{1},A_{2}\dots A_{m}\right]$ is the set of preference profiles where all objects in $A_k$ are preferred to all objects in $A_l$ if $k<l$ but the ranking of objects in $\obj-\cup A_k$ is left unrestricted. $P^{\uparrow}\left[A_{1},A_{2}\dots A_{m}\right]$ is the set of preferences where the objects in $A_1$ are ranked above all other objects, the objects in $A_2$, ranked above all objects other than $A_1$ and so on. When the $A_{i}$ are singletons, we will abuse notation and drop the curly brackets, writing for example $P^{\uparrow}[a]$ to denote $P^{\uparrow}[\{a\}]$. We will also abuse notation slightly and use $N$ to refer both to the set of agents and to the number of agents.

\subsection{Desiderata}

In practice, mechanisms are designed to satisfy efficiency and incentive properties, for which the following are well-known conditions.

\begin{definition}
A mechanism $f:\pp \rightarrow C $ is
\begin{enumerate}
    \item \textbf{strategy-proof} if, for every $\succsim \in \pp$ and every $i\in N$, there is no $\succsim'_i$ $$f_{i}(\succsim_i',\succsim_{-i})\succ_{i}f_{i}(\succsim_i,\succsim_{-i})$$ for all $\succsim'_{i}\in P$. That is, truth-telling is a weakly dominant strategy.
    \item \textbf{group strategy-proof} if, for every $\succsim \in \pp$ and every $M\subset N$, there is no $\succsim'_{M}$ such that
    \begin{enumerate}
    \item $f_{j}(\succsim'_{M},\succsim_{-M})\succsim_{j}f_{j}(\succsim) \text{  for all  }j\in M $;
    \item $f_{k}(\succsim'_{M},\succsim_{-M})\succ_{k}f_{k}(\succsim)$ for at least one $k\in M$.
    \end{enumerate}
    \item {\bf pairwise strategy-proof} if, for every $\succsim \in \pp$ and every pair of agents $i,j$, there is no $\succsim_i'$ and $\succsim_j'$ such that
    \begin{enumerate}
    \item $f_{i}(\succsim'_{i},\succsim_j',\succsim_{-\{i,j\}})\succsim_{i}f_{i}(\succsim)$;
    \item $f_{j}(\succsim'_{i},\succsim_j',\succsim_{-\{i,j\}})\succ_{j}f_{j}(\succsim)$.
        \end{enumerate}
    \item \textbf{Pareto efficient} if there is no allocation $a\in C$ and preference profile $\succsim$ such that $a\neq f(\succsim)$ and $a_{j}\succsim_{j}f_{j}(\succsim)$ for all $j$.
\end{enumerate}
\end{definition}

\noindent Strategy-proofness requires that no agent can improve her outcome by misreporting her preference. Group strategy-proofness is similar, but instead requires that no group can collectively misreport their preferences without hurting anyone while strictly benefiting at least one agent. This is often called ``strong group strategy-proofness'' to contrast it with a weaker notion requiring that deviating coalitions make all  agents strictly better-off.\footnote{In our setting, weak group strategy-proofness is equivalent to strategy-proofness \cite{BaBeMo16}.} Pareto efficiency is a classic efficiency axiom. It requires that there is no way to make any agent better-off without making another agent worse-off. Group strategy-proofness can be relaxed to pairwise strategy-proofness, which only requires that no pair of agents can profitably deviate. In fact, as we will see when we move to the $N$-agent case, our environment falls under the hypotheses of Theorem 1 by \cite{Alva17} so there is no gap between group and pairwise strategy-proofness.

One candidate deviating coalition is the grand coalition. So if $f$ is group strategy-proof and $f(\succsim)=a$ for some profile $\succsim$, then $a$ can never Pareto dominate $f(\succsim')$ for any other profile $\succsim'$, since all agents could collectively report $\succsim$ instead of $\succsim'$ to get an improvement. This leads to the following simple fact.

\begin{remark} \label{PE image}
If $f:\pp \rightarrow \alloc$ is group strategy-proof then it is Pareto efficient on its image.\footnote{That is, if an allocation Pareto dominates $f(\succsim)$, then that allocation is outside the image of $f$.}
\end{remark}

We will use this fact to reduce group strategy-proofness and efficiency to group strategy-proofness and surjectivity. The goal of this paper is to understand the correspondence between the primitives (the set of agents, objects, and the constraint) and the set of group strategy-proof and Pareto efficient mechanisms. We will use $GS(C)$ to denote the set of group strategy-proof and Pareto efficient mechanisms.

\subsection{Constrained Allocation and Exogenous Indifferences}

    Constrained object allocation is closely related to the social choice problem where there are exogenous indifferences as studied in \citeasnoun{sato2009strategy}, \citeasnoun{barbera2011free} and \citeasnoun{Meng19}. This problem can be described as follows. Suppose that $X$ is a finite set of outcomes. A finite set of voters $N$ have preferences over $X$, however each is assumed to have a given set of indifferences. That is, for each $i\in I$, there is a partition $X^{i}=\{X^{i}_1, \dots, X^{i}_{m_i}\}$ on $X$ so that for any $k\neq l$, we have $X^i_{k}\cap X^i_{l}=\emptyset$ and $\cup_l X^{i}_l=X$. Agent $i$ is assumed to be indifferent between any two $x$ and $y$ contained in a single $X^i_{l}$ and has a strict preferences if $x$ and $y$ belong to different $X^i_j$. Constrained object allocation can be seen as a special case by letting $X=C$ and, for each $i$, partitioning $X$ into the allocations with a fixed allocation for $i$. Going the other direction, we can also rewrite social choice with exogenous indifference classes as constrained object allocation as follows. Let $X^{i}$ be the set of indifference classes for $i$ and let $\obj$ be the disjoint union of the $X^i$ over all $i$. Let $C$ be the constraint such that an allocation $a=(a_i)_{i\in N}$ is feasible if and only if (1) for all $i$, $a_i\in X^i$ and (2) $\cap_i a_i\neq \emptyset$. That is, an allocation of indifference classes is feasible if and only if each agent is assigned one of their indifference classes and there is some outcome in $X$ that lives in the intersection of these indifference classes. From this perspective, the two problems are equivalent. In this paper, we favor the constrained allocation perspective because we feel that it is easier to formulate the special cases that we are interested in.

\section{Characterization Results}

We begin with the two-agent case where we can explicitly construct the class of strategy-proof and Pareto efficient mechanisms for an arbitrary constraint. All such mechanisms turn out to be what we call ``adapted local dictatorships'' where a dictator is chosen as a function of the announced preferences. We then turn to the general case where we introduce the notion of ``marginal mechanisms" and show that an $N$-agent mechanism is group strategy-proof if and only if each $2$-agent marginal mechanism is an adapted local dictatorship. We apply these findings in Sections \ref{sec: existence} and \ref{sec: specific constraints}.

\subsection{Two Agents}\label{two agent section}

For the special but important case with two agents, strategy-proof and Pareto efficient mechanisms take a particularly simple form. The set of infeasible allocations $\bar{C}$ splits into two subsets $D_{1}$ and $D_{2}$ over which the two agents are given dictatorship rights. Specifically, given any preference profile, we consider the allocation $(a,b)$ where $a$  and $b$ are respectively the top choices of agents $1$ and $2$. If $(a,b)$ is feasible, Pareto efficiency forces this allocation. Otherwise, $(a,b)$ belongs to either $D_{1}$ or $D_{2}$. If it belongs to $D_{1}$, agent $1$ receives her top object $a$, while $2$ receives her favorite object $b'$ compatible with $a$ (such that $(a,b')\in C$). Likewise if $(a,b)$ belongs to $D_{2}$, $2$ keeps her top choice and $1$ must compromise. We call this type of mechanism a ``local dictatorship." We dub the two agents the ``local dictator" and the ``local compromiser" respectively.\footnote{ To our knowledge, the term ``local dictatorship was first used by \citeasnoun{sato2012strategy}. That paper considers a social choice problem where the set of outcomes is partitioned into categories so that all agents have rankings that are lexicographic in the sense that categories are ranked and then alternatives within categories are ranked to break ties. He shows that mechanisms decompose into a category-level mechanism and a within-category mechanism. The within-category mechanisms can have different dictators than the category-level dictator.}

Given that we have allowed for complete generality in the constraint, the procedure above should account for the possibility that the local compromiser may not be able to find a feasible object. For example, we should not specify the local dictator at $(x,y)$ to be agent $1$, if for all $y'$, the allocation $(x,y')$ is infeasible, so there is no compromise agent $2$ could make to allow agent $1$ to consume her favorite object $x$. Notice, however that this is a trivial difficulty as $1$ can never feasibly be assigned object $x$. It would seem that we should therefore be able to ignore $1$'s ranking of $x$. This turns out to be true, and we can ignore objects that are never assigned to an agent without loss of generality. It is no more difficult to show this for any number of agents, so we include the general result here.

\begin{lemma}\label{dumb objects dont matter}
Fix a constraint $C$ for any number of agents. Let $\chi_{i}=\{y\in \obj \vert \forall y_{-i} \hspace{0.1cm}(y,y_{-i})\in \bar{C}\}$. Suppose $f:\pp \rightarrow C$ is group strategy-proof and Pareto efficient. If $\succsim$ and $\succsim'$ are preference profiles such that, for all $i$,  $\left. \succsim_{i} \right|_{\obj-\chi_{i}} = \left. \succsim'_{i} \right|_{\obj-\chi_{i}}$,  then $f(\succsim)=f(\succsim')$
\end{lemma}

 Let $\bar{C}^{*}=\{(x,y) \sst (x,y)\notin C\text{ and }x\notin \chi_{1}, y\notin \chi_{2}\}$. That is, $\bar{C}^{*}$ is the set of infeasible allocations excluding the trivial cases described above. We will call $\obj - \chi_{i}$ agent $i$'s \textbf{individually-feasible} choices. 
 
 As mentioned, Pareto efficiency requires allocating both agents their top choices if doing so is feasible. The main job of a mechanism is to adjudicate the outcome when one agent must give up on her top choice. It turns out that strategy-proofness will demand a local dictator is determined as a function of only the agents' highest ranked objects. Suppose $(a,b)$ is the allocation that assigns each agent her favorite individually-feasible object. If $(a,b) \in C$, then any efficient mechanism gives this allocation. If not, then a local dictatorship assigns an agent as dictator. That dictator, say it is agent 1, gets her favorite object $a$, while the non-dictator agent $2$ must compromise and is assigned her favorite object among those that are mutually feasible with $1$ being assigned $a$. Since the identity of the dictator only depends on the agents' top choices, a local dictatorship partitions the space of infeasible allocations into two groups, one where agent $1$ is the dictator and another for agent $2$.\footnote{Note that this does not imply local dictatorships satisfy ``tops-only'' conditions from social choice. That is because the non-dictator's object still depends on her entire rank order to determine her second-best assignment.}
 
\begin{definition}
A mechanism is a \textbf{local dictatorship} for the constraint $C$ if there is a local dictator assignment $D : \bar{C}^{*} \rightarrow \{1,2\}$ such that, for any $(\succsim_{1},\succsim_{2})$, if $a$ and $b$ are  $1$ and $2$'s top individually-feasible choices, 
 \begin{equation} \label{local dictatorship}
    f(\succsim_1,\succsim_2)=
    \begin{cases}
    (a  ,b)  &\text{ if } (a,b) \in C \\
    \left(a, \argmax_{\succsim_2} \{ b' : (a,b') \in C\} \right) &\text{ if } D(a,b)  = 1\\
        \left(\argmax_{\succsim_1} \{ a' : (a',b) \in C \} ,b\right) &\text{ if } D(a,b) = 2
    \end{cases}
\end{equation}
\end{definition}

We first demonstrate that local dictatorship is necessary for individual strategy-proofness and Pareto efficiency with two agents.
 
 \begin{lemma}\label{local dictatorship 1}
  Suppose that $\vert N \vert =2$. For any constraint $C$, if $f:\pp\rightarrow C$ is strategy-proof and Pareto efficient, then it is a local dictatorship.
 \end{lemma}
 
 \begin{proof}
 
If $\bar{C}$ is empty, then all allocations are feasible and the unique Pareto efficient mechanism gives each agent her top choice at every profile. So assume $\bar{C}$ is nonempty and fix a strategy-proof and Pareto efficient mechanism $f:P^{2}\rightarrow C$.\footnote{Serial dictatorship always is Pareto efficient and strategy-proof, so one exists.} Let $a$ and $b$ be individually-feasible objects for $1$ and $2$ respectively. If $(a,b)$ is feasible, by Pareto efficiency, the allocation $(a,b)$ must be chosen. If $(a,b)\in \bar{C}$, since $a$ and $b$ are individually-feasible, there are $a'$ and $b'$ with $(a',b)$ and $(a,b')$ in $C$. Let $\succsim_{1}\in P^{\uparrow}\left[a,a'\right]$ and $\succsim_{2}\in P^{\uparrow}\left[b,b'\right]$. By Pareto efficiency, $f(\succsim_{1},\succsim_{2})=(a,b')$ or $f(\succsim_{1},\succsim_{2})=(a',b)$. Assume without loss that $f(\succsim_{1},\succsim_{2})=(a,b')$. We will show that this implies that $1$ is the local dictator at $(a,b)$. That is, for any other preference profile where $1$ top-ranks $a$ and $2$ top-ranks $b$, $1$ gets $a$ while $2$ gets her favorite object compatible with $a$. Pick any other $\succsim_{2}'$ which top-ranks $b$. By 2's strategy-proofness, $f_{2}(\succsim_{1},\succsim_{2}')\neq b$, but then from Pareto efficiency, $f_{1}(\succsim_{1},\succsim_{2}')= a$, since otherwise, the allocation $(a',b)$ would Pareto dominate $f(\succsim_{1},\succsim_{2}')$. Thus $f_{1}(\succsim_{1},\succsim_{2}')=a$ whenever $\succsim_{2}'\in P^{\uparrow}\left[b\right]$. Then using 1's strategy-proofness, we have that $f_{1}(\succsim_{1}',\succsim_{2}')=a$ for all $\succsim_{1}',\succsim_{2}'$ with $\tau(\succsim_{1}',\succsim_{2}')=(a,b)$. Finally, by Pareto efficiency, $f(\succsim_{1}',\succsim_{2}')=(a,\argmax_{\succsim_{2}'}\{ b' : (a,b') \in C\})$ whenever $\tau_{1}(\succsim_{1}',\succsim_{2}')=(a,b)$. Thus we say that $1$ is the local dictator at $(a,b)$. Since $(a,b)$, was arbitrary every other infeasible allocation in $\bar{C}^{*}$ has a local dictator by a symmetric argument.
\end{proof}

However, this only establishes necessity. Not all dictatorship partitions will be strategy-proof. For example, suppose agent $1$ is the local dictator when $(a,b)$ are the top choices, while agent $2$ is the local dictator at $(a,b')$. Then there are preference profiles where agent $2$ benefits from gaining local dictatorship rights by misreporting her top choice as $b'$. For this reason, we show that if $1$ is the local dictator at $(a,b)$, then individual strategy-proofness implies that she is also the dictator at $(a,b')$. Similarly, if $(a',b')$ is infeasible and $1$ is the dictator at $(a,b')$ while $2$ is the dictator at $(a',b')$, there are preference profiles where agent $1$ benefits from misreporting her top choice as $a'$, again to gain dictatorship rights. So strategy-proofness requires that the same dictator have control at any two infeasible joint allocations that agree on one of the dimensions of the allocation. A local dictator assignment $D:\bar{C}^{*} \rightarrow \{1,2\}$ is said to be \textbf{adapted} if for all $(a,b)$ and $(a',b')$ in $\bar{C}^*$,
\begin{align*}
D(a,b) = D(a',b') \text{ whenever } a=a' \text{ or } b = b'.
\end{align*}
and the resulting local dictatorship mechanism is also said to be adapted.

\begin{theorem} \label{two agent}
For any two-agent constraint $C$, a mechanism $f$ is strategy-proof and Pareto efficient if and only if it is an adapted local dictatorship. 
\end{theorem}

\begin{proof}
First we verify necessity of the axioms. The necessity of local dictatorship was established in Lemma \ref{local dictatorship 1}. It remains to show that the local dictatorship must be adapted. Suppose that $(a,b)$ and $(a',b')$ are two infeasible allocations in $\bar{C}^{*}$ and either $a=a'$ or $b=b'$. Without loss, assume $a=a'$. Suppose by way of contradiction that $(a,b)$ and $(a,b')$ have different local dictators. For example, suppose $D(a,b) = 1$ and $D(a,b') = 2$. Consider the preference profile $(\succsim_{1},\succsim_{2})$ where $\succsim_{1}\in P^{\uparrow}\left[a\right]$ and $\succsim_{2}\in P^{\uparrow}\left[b,b',b''\right]$ where $b''$ is such that $(a,b'')\in C$ (such a $b''$ exists because $a$ is individually-feasible). Then since $1$ is the local dictator at $(a,b)$, we get $f(\succsim_{1},\succsim_{2})=(a,b'')$. However, if $\succsim_{2}'\in P^{\uparrow}\left[b'\right]$, then $f_{2}(\succsim_{1},\succsim_{2}')=b'\succ_{2}b''=f_{2}(\succsim_{1},\succsim_{2})$ since $2$ is the local dictator at $(a,b')$. This gives a violation of strategy-proofness. Thus either $1$ is the local dictator at both $(a,b)$ and $(a,b')$ or $2$ is.

Now we turn to sufficiency. Suppose $f$ is an adapted local dictatorship. Efficiency is immediate. To verify strategy-proofness, fix a preference profile $(\succsim_1, \succsim_2)$. If agent $1$ is the dictator at $\tau_{1}(\succsim_1, \succsim_2)$, then she clearly has no reason to misreport. So suppose $1$ is not the dictator and let $\succsim'_1$ be an alternative preference for agent $1$. Either $\tau_{1}(\succsim'_1, \succsim_2)$ is feasible, or by adaptedness, agent $2$ is still the dictator at this profile because her top choice, say $b$, is the same as in $(\succsim_1, \succsim_2)$. Since $f_2 (\succsim'_1, \succsim_2) = b$, it must be that $f_1 (\succsim'_1, \succsim_2) \in \{a' : (a',b) \in C\}$ by feasibility. But then $f_1 (\succsim_1, \succsim_2) = \argmax_{\succsim_1} \{ a' : (a',b) \in C \} \succsim f_1 (\succsim'_1, \succsim_2)$, verifying strategy-proofness.
\end{proof}

One can easily establish the range of possibilities allowed by Theorem \ref{two agent}. Connect any two infeasible allocation $(a,b)$ and $(a',b')$ in $\bar{C}^*$ with an edge if either $a = a'$ or $b = b'$. This induces a graph $\Gamma(C)$ and adaptedness is simply the requirement that connected components of this graph share the same dictator. This process  is depicted in Figure \ref{two-agent example}, where we first remove objects that can't be allocated, then decompose the infeasible set into connected components.



\begin{corollary}
The number of strategy-proof and Pareto efficient mechanisms is $2^{Y}$ where $Y$ is the number of connected components of $\Gamma(C)$.
\end{corollary}

\begin{figure}[h]
\centering
\includegraphics[scale=0.56]{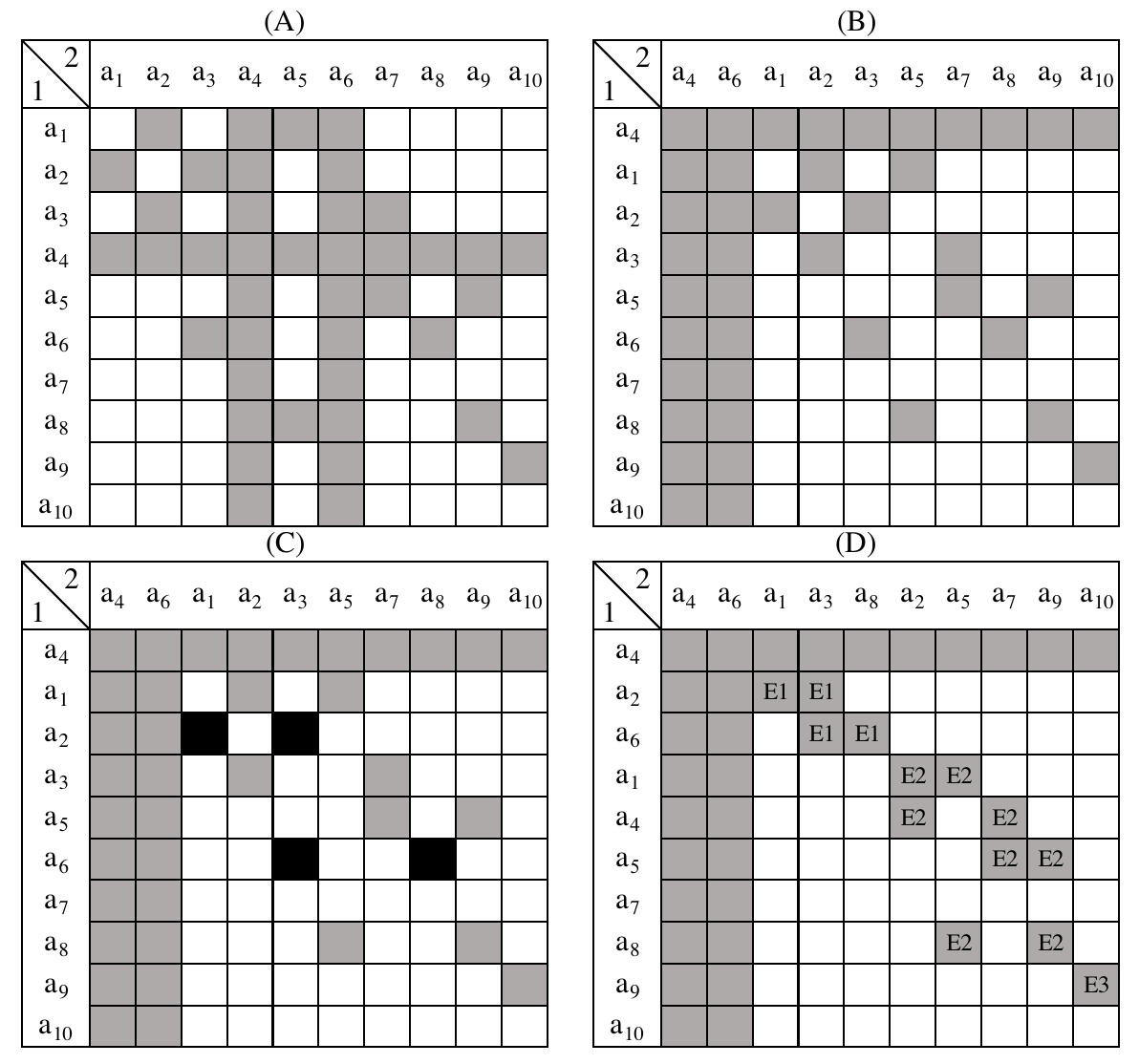}
\caption{Two-agent Example}
\label{two-agent example}
\end{figure}

In independent and contemporaneous work, \citeasnoun{Meng19} provides a characterization of all strategy-proof and Pareto efficient mechanisms for the two-agent social choice problem with exogenous indifferences. His characterization involves assigning a dictator at all profiles of preferences over announced indifference classes, where the dictator assignment must respect a cell-connected property. The structure of his characterization has similarities to our assignment of local dictators to the infeasible set. In fact, either result can be deduced from the other. While technically equivalent, these results are cast for very different questions, his for known indifference classes and ours for a known constraint. While the papers should share precedence for the mathematical result, the interpretations, and applications of the two papers differ.

\subsection{Many Agents}\label{sec: N agent}

We now turn to the problem of describing the class of group strategy-proof and efficient mechanisms for the many agent case. Before proceeding to the general problem, we start with a particular family of constraints where the analysis from Section \ref{two agent section} carries over with only slight modification. We call these constraints \textit{unilateral}. Unilateral constraints are of independent interest and contain task allocation problems and certain scheduling problems as special cases. 
We then turn to general constraints and show that a mechanism is group strategy-proof and efficient if and only if it is constructed by gluing together two-agent local dictatorship mechanisms. This characterization is less descriptive than the result given in Theorem \ref{two agent}. Nevertheless, in subsequent sections, we apply this characterization to many prominent constraints. 

\subsubsection{Unilateral Constraints}

Unilateral constraints are those constraints where, for any infeasible allocation, any agent can restore feasibility by changing their allocation alone.

\begin{definition}
A constraint $C$ is called \textbf{unilateral} if, for every infeasible $(a_{i})_{i\in N}$ and for every $i$, there is an object $a_{i}'$ such that $(a_{i}',a_{-i})$ is feasible. 
\end{definition}

We give several examples of unilateral constraints and then provide a characterization.

\begin{example}[Task allocation]\label{ex: task allocation}
    Consider a manager who needs to allocate a finite set $T$ of tasks to their employees. Objects $\obj = 2^T$ are subsets of tasks. The constraint is that every task be assigned to at least one worker, though employees can work together on a task and each worker can be assigned to multiple tasks. Therefore the constraint is $\{(a_i)_{i\in N} \sst \cup a_i=T\}$. In Figure \ref{fig: single compromising}, we show the constraint with three workers indexed by $1,2$ and $3$ and two tasks indexed by $a$ and $b$. Workers can be assigned no task -- $\emptyset$, task $a$ alone, task $b$ alone or both tasks $a$ and $b$. Note that this is a unilateral constraint since each worker can switch to take on all tasks and the outcome will be feasible, regardless of other workers' allocations. 

    \begin{figure}[h]
    \begin{center}
        \includegraphics[scale=0.58]{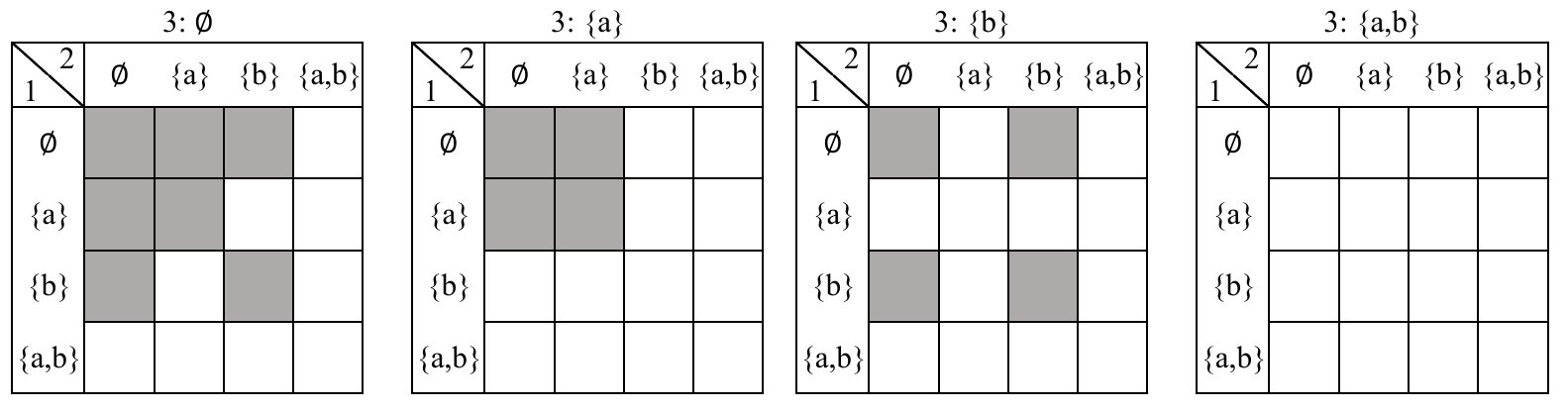}
        \label{fig: single compromising}
        \end{center}
        \caption{Task allocation constraint for three agents and two tasks. Agent $3$'s allocation is defined by the panel while agent $1$ and $2$'s allocation is determined by the row and column respectively. The infeasible allocations are shown in gray.}
    \end{figure}
\end{example}

\begin{example}[Airline regulation] \label{ex: airlines}
    Suppose that a regulator wants to ensure that airlines supply enough routes so that between every major city there is a way to get from one city to another, possibly with up to $k$ layovers. The airlines have preferences over their assigned bundles of routes. In this case, the constraint is unilateral in the sense that, for any network defined by some proposed routes, the regulator could force a single airline to supply enough missing routes to satisfy the connectedness requirements.\footnote{Before 1978 the Civil Aeronautics Board allocated routes to airlines in a centralized manner. Now, airlines decide their own route networks by bidding for gate rights at airports.}
\end{example}

Both examples are special cases of a more general construction. Whenever the objects to be allocated are themselves subsets of a set of basic items (e.g., sets of tasks and sets of routes in the two examples respectively) and the constraint depends only on the union and is monotone in the sense that every superset of a feasible union is feasible, then the constraint is unilateral. Formally, if $H$ is a basic collection of objects and $\obj=2^H$, then for a nonempty collection of subsets $\mathcal{H} \subseteq 2^H$ define $C_\mathcal{H} = \{(a_i)_{i \in N} : \bigcup_i a_i \in \mathcal{H} \}$. If $\mathcal{H}$ is closed under supersets, that is, $S' \in \mathcal{H}$ whenever $S \in H$ and $S' \supseteq S$, then $C_{\mathcal{H}}$ is unilateral. For example, the collection of $k$-path-connected route networks defines such a constraint on the allocations of routes, so the airline regulation example is unilateral. 

Another source of unilateral constraints are settings where a small portion of the allocations are infeasible. For example, suppose that each $a\in \obj$ is associated with a cost $p_a$ for the planner. The planner has a budget $b$ and $(a_i)_{i\in N}$ is feasible if and only if $\sum p_{a_i}\leq b$. As the budget $b$ grows, more allocations become feasible. As soon as $b\geq (\vert N \vert -1)p_{\max} +p_{\min}$, where $p_{\max}$ and $p_{\min}$ are the most and least costly object prices respectively, the constraint is unilateral.

With unilateral constraints, every group strategy-proof and Pareto efficient mechanism can be described in a simple manner analogous to the characterization of the two-agent case. In fact, this is what makes the two-agent case special: up to removing objects that can never be allocated (along the lines of Lemma \ref{dumb objects dont matter}), every two-agent constraint is unilateral. So, unilateral constraints are theoretically helpful to understand the scope and limits of the two-agent result because these are cases where that simple characterization extends to many agents.

For any unilateral constraint $C$, a \textbf{local compromiser assignment} is a function $r$ that associates an agent to every infeasible allocation $\bar{C}$. We require that if $r(a)=i$ for some infeasible allocation, then for any other infeasible allocation  $(a_{i}',a_{-i})$, we must also have $r(a_{i}',a_{-i})=i$. A local compromiser assignment $r$ induces a \textbf{unilateral mechanism} that works as follows. For any preference profile $\succsim$, give every agent her top choice, resulting in the allocation $a$, if it is feasible to do so. Otherwise, when $a\in \bar{C}$, give the agent $r(a)$ their favorite alternative compatible with $a_{-r(a)}$ and give all other agents $j$ their favorite objects $a_j$. By definition, such an alternative exists and the associated allocation will be feasible. The following proposition shows that unilateral mechanisms are group strategy-proof and efficient and that all group strategy-proof and efficient mechanisms are unilateral mechanisms. 

\begin{theorem} \label{single-compromising}
Let $C$ be unilateral. A mechanism $g$ is group strategy-proof and Pareto efficient if and only if it is a unilateral mechanism.
\end{theorem}

Returning to the task allocation example, Theorem \ref{single-compromising} can be used to generate group strategy-proof and Pareto efficient mechanisms. 

\paragraph{Example \ref{ex: task allocation} Continued.} Figure 4 gives an example of a function $r$ which induces unilateral mechanism for the constraint in Example \ref{ex: task allocation}.
    \begin{figure}[h]
    \begin{center}
        \includegraphics[scale=0.58]{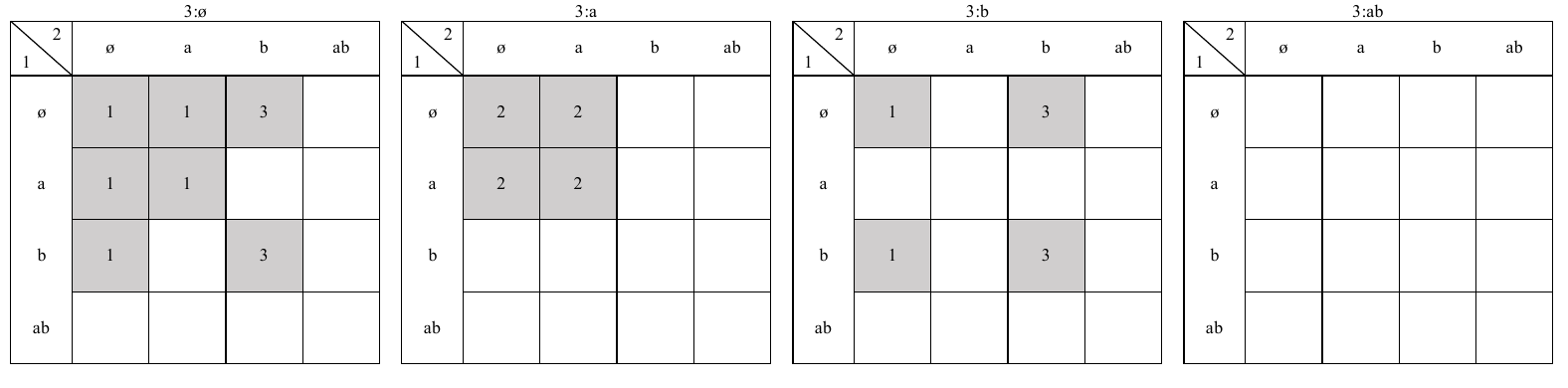}
        \label{fig: single compromising 2}
        \end{center}
        \caption{An example of a unilateral mechanism for a three agents and two task allocation problem.}
    \end{figure}
    Notice that this mechanism is non-dictatorial and no worker is guaranteed her top choice. A single worker is required to take-on any remaining tasks from an infeasible allocation. However, this worker is determined endogenously. For example, if $1$ top-ranks $\emptyset$ and $2$ top-ranks $b$ then $3$ must chose between $\{a\}$ and $\{a,b\}$.

\subsubsection{A General Characterization}

We now turn to general constraints. It will be useful to first develop the notion of a marginal mechanism. 

\begin{definition}
Let $f:\pp\rightarrow C$ and let $M$ be a proper subset of $N$. Let $\succsim_{-M}$ be a profile of preferences for the agents not in $M$. The \textbf{marginal mechanism} at $\succsim_{-M}$ is denoted $f_{\succsim_{-M}} :P^{M} \rightarrow \obj^{M}$ and defined as the function $$\succsim_M \enspace \mapsto \, \left[f_{j}(\succsim_M, \succsim_{-M}) \right]_{j\in M}.$$ We denote $I(\succsim_{-M})=f_{\succsim_{-M}}(P^M)$ which will be referred to as $M$'s \textbf{marginal option set}.
\end{definition}

Thus a marginal mechanism holds fixed some of the agents' preferences $\succsim_{-M}$ and defines an $\vert M\vert$-agent mechanism for the remaining agents, mapping their profile of announcements $\succsim_{M}$ to an $M$-agent allocation. For strategy-proof mechanisms, the single-agent marginal mechanisms are especially simple. They simply ask the agent pick from a fixed menu, often called the ``option set." In fact, a mechanism is strategy-proof if and only if the single-agent marginal mechanisms are of this type. To our knowledge, \citeasnoun{Barbera83} was the first to make this observation. Single-agent marginal mechanisms have also been used in the experimental literature to simplify the description of mechanisms, for example by \citeasnoun{gonczarowski2023strategyproofness}. Clearly, marginal mechanisms inherit the group strategy-proofness of the original mechanism. It turns out that in the other direction, group strategy-proofness of the two-agent marginal mechanisms suffices for groups strategy-proofness of the full mechanism. 

\begin{lemma}[\citeasnoun {Alva17}, Theorem 1] \label{N agent}
The mechanism $f:\pp\rightarrow C$ is group strategy-proof if and only if it is pairwise strategy-proof. 
\end{lemma}

This significantly reduces the number of conditions one need to check to verify that a mechanism is group strategy-proof. Rather than verifying incentives for all coalitions, it is sufficient to check that no two agents can profitably misreport their preferences. The equivalence of group and pairwise incentives was originally proved for the house allocation domain by \citeasnoun{Papai00}. The most general equivalence was proven by \citeasnoun{Alva17} for a broad set of preference domains. We give a more direct proof for our setting in the appendix. The main intuition comes from the fact that group strategy-proofness is equivalent to individual strategy-proofness plus a condition called non-bossiness. Non-bossiness requires that no agent can affect the outcome of another agent by changing her report, unless it also changes her allocation. Non-bossiness is a pairwise condition and therefore group strategy-proofness can be checked by consulting only single agents and pairs. 

While this result takes a significant step towards understanding group strategy-proofness, it is especially useful in light of our characterization of two agent strategy-proof and efficient mechanisms. For two-agent mechanisms, there is only one group coalition, namely the grand coalition. Therefore, group strategy-proofness of a two-agent mechanism is equivalent to individual strategy-proofness and Pareto efficiency on its image. Combining Lemma \ref{N agent} with Theorem \ref{two agent}, we get a description of all group strategy-proof and Pareto efficient mechanisms. 

\begin{theorem}\label{N agent combined}
A mechanism $f:\pp\rightarrow C$ is group strategy-proof and Pareto efficient if and only if $f_{\succsim_{-ij}}$ is an adapted local dictatorship (using the marginal constraint) for any two agents $i,j$ and any residual preference profile $\succsim_{-ij}$.
\end{theorem}

Compared to Theorem \ref{two agent}, this characterization is considerably less descriptive. Theorem \ref{two agent} gives an explicit and simple description that captures all strategy-proof and Pareto efficient mechanisms for an arbitrary two agent constraint.  Theorem \ref{N agent combined} describes such mechanisms more indirectly. Nevertheless, as we will see in our applications, this characterization substantially reduces the burdens in verifying and constructing group strategy-proof and Pareto efficient mechanisms for many constraints.

\section{Existence Results}\label{sec: existence}

In this section, we define sequential dictatorships, a class of mechanisms that generalize the familiar serial dictatorship mechanisms and that are group strategy-proof and Pareto efficient for any constraint. We then provide a theorem that provides conditions on the constraint $C$ that are sufficient to guarantee that $C$ has group strategy-proof and efficient mechanisms that are not sequential dictatorships. As a corollary, we deduce that the two-sided matching and school choice problems admit non-dictatorial mechanisms. 

\subsection{Sequential Dictatorships}

We begin by formalizing the sequential dictatorship mechanism. This mechanism is always group strategy-proof and Pareto efficient. In the more familiar serial dictatorship, agents are ordered exogenously, and each agent in turn selects her most preferred object that is feasible given the choices of earlier dictators. Sequential dictatorship generalizes this idea by allowing the order in which remaining agents are called to depend on the choices made by earlier dictators.\footnote{To our knowledge, sequential dictatorships were first described by \cite{Papai01} in the multiple-assignment problem, with related formulations appearing in \citeasnoun{KlausMiyagawa02}, \citeasnoun{EhlersKlaus03}, and \citeasnoun{Hatfield09}. These papers show that sequential dictatorship is the unique mechanism satisfying group strategy-proofness and Pareto efficiency in various multi-object assignment environments. \citeasnoun{PyUn25} explore sequential dictatorships in a general model and establish cases where ordinality, group strategy-proofness and efficiency imply sequential dictatorship. }

A shortcoming of sequential dictatorship is that it concentrates power to the early dictators, so is sometimes considered an unfair mechanism. However, in some situations sequential dictatorship is the only available group strategy-proof and Pareto efficient mechanism. The applications in the sequel of this section demonstrate that characterization for social choice, one-sided matching, and multiple assignment. 

We now formally define sequential dictatorship. Recall that $\salloc$ is the set of suballocations (i.e. the maps $\mu:M\rightarrow \obj$ where $M\subset N$). Let $\salloc'$ be the set of incomplete suballocations.\footnote{Where $M$ is a proper subset of $N$.} A \textbf{picking order} is a map $\zeta: \salloc'\rightarrow N$ such that for any suballocation $\mu$, $\zeta(\mu)$ is an agent not allocated an object under $\mu$. For each picking order and for any constraint $C$ we may define the \textbf{sequential dictatorship} for $\zeta$ to be the mechanism whose allocation at any preference profile is determined by the following algorithm:
\vspace{0.2cm}
\begin{tcolorbox}
\fbox{Step 1} \hspace{0.3cm} The agent $d_{1}\equiv \zeta(\emptyset)$ is the first dictator. She chooses her favorite object $a_{1}$ from $\pi_{d_{1}}C$. Let $\mu_{1}$ be the suballocation where $d_{1}$ is assigned $a_{1}$ and all other agents are unassigned.
\vspace{0.1cm}

\noindent \fbox{Step k} \hspace{0.3cm} The agent $d_{k}\equiv \zeta(\mu_{k-1})$ chooses her favorite object $a_{k}$ from $\pi_{d_{k}}C(\mu_{k-1})$. Let $\mu_{k}$ be the allocation that agrees with $\mu_{k_1}$ and additionally assigns $d_k$ the object $a_k$. If all agents have been assigned an object, stop. If not, continue to step $k+1$.
\end{tcolorbox}

Standard serial dictatorship, where the order of dictators is fixed and invariant to earlier choices, are a prominent special case of sequential dictatorships where the picking order only depends on the agents identified in a suballocation, that is where $\zeta(\mu) = \zeta(\mu')$ whenever $\mu$ and $\mu'$ are suballocations to the same subcoalition $M$ of agents.

Notice that a single mechanism can be the result of many picking orders. This is because the picking order $\zeta$ can be defined in any way off the ``algorithm path,'' in the sense that suballocations which will never be realized can be assigned any agent as the next dictator. For example, in the serial dictatorship mechanism, any allocation in which a single agent other than the dictator is assigned an object will never be realized, so the picking order there is immaterial. Keeping this redundancy in mind, it is nonetheless convenient to take $\salloc'$ as the domain of all picking orders. The following remark implies non-emptiness of the set of group strategy-proof and Pareto efficient mechanisms. 
\begin{remark}\label{generalized_serial_dictatorships}
For any constraint $C$, any sequential dictatorship is group strategy-proof and Pareto efficient.
\end{remark}
\noindent These properties of sequential dictatorship are extended from serial dictatorship, where the ordering is independent of the profile.

We say that a constraint $C$ admits \textbf{non-dictatorial} group strategy-proof and efficient mechanism if there are group strategy-proof and efficient mechanisms which are not sequential dictatorships. We will discuss in the next section why some constraints like social choice, multiple assignment, and the roommates problem are dictatorial.  \citeasnoun{PyUn25} provide a more focused analysis of sequential dictatorships. They take ordinality, which includes group strategy-proofness and Pareto efficiency as cases, as their overarching criterion and unify different cases where sequential dictatorships are the only ordinal mechanisms. Their result is more general in that it applies to preference domains beside strict profiles. \citeasnoun{PyTr23} show that mechanisms that resemble sequential dictatorship are simple for agents to understand in the sense that incentives are very clearly strategy-proof.

\subsection{The Existence of Non-dictatorial Mechanisms}

We now give a partial converse to the Gibbard-Satterthwaite Theorem. That is, we provide conditions which guarantee that a constraint admits non-dictatorial group strategy-proof and efficient mechanisms.

Fix a constraint $C$. If there is a pair of agents $i,j$ such that $\Gamma(C^{\{i,j\}})$ has at least two connected components then by Theorem \ref{two agent} we can construct a non-dictatorial mechanism $f$ for $i$ and $j$ on the constraint $C^{\{i,j\}}$.\footnote{Recall that $\Gamma(C)$ is the graph formed by taking as nodes the set of infeasible allocations and including an edge between any two infeasible allocations that agree on one coordinate.} This mechanism can be extended to the remaining agents using sequential dictatorship. This is summarized by the following theorem.

\begin{theorem}\label{GS converse}
If a constraint $C$ is such that for some $i,j$, the graph $\Gamma(C^{i,j})$ has more than one component, then $GS(C)$ is strictly larger than the set of sequential dictatorship mechanisms. 
\end{theorem}

This result provides a simple test to verify that non-dictatorial mechanisms exist. Many known cases where other mechanisms are available fall under this result.

\begin{corollary}\label{cor: existence}
    The following settings admit non-dictatorial mechanisms:
    \begin{itemize}
        \item Two-sided matching with at least two agents on each side
        \item School choice with at least two schools $s$ and $t$ with capacity $q_s$ and $q_t$ such that $q_s+q_t \leq N $
        \item House allocation with at least two houses.
    \end{itemize}
\end{corollary}

As shown by the characterization for house allocation mechanisms due to \citeasnoun{PyUn17}, an exact description of the class of mechanisms can be difficult and subtle when there are non-dictatorial mechanisms. \citeasnoun{root2023royal} characterize the group strategy-proof and efficient two-sided matching mechanisms under an additional fairness requirement that requires the mechanism to treat the two sides symmetrically. To the best of our knowledge, absent such a condition, no characterization is known. The school choice setting is another important outstanding problem. Top trading cycles can be adapted to school choice \cite{abdulkadirouglu2003school}. While deferred acceptance is not group strategy-proof or efficient in general, under an acyclicity condition on priorities, it recovers these properties \cite{ergin2002efficient}. Any characterization will have to nest these two classes of mechanisms. 

While we are primarily interested in applying the two-agent characterization to general constraints in this paper, in a companion paper, we provide a technique for constructing new mechanisms for arbitrary constraints \cite{root2023local}. We call these ``local priority mechanisms." There is no a priori guarantee that local priority mechanisms will capture all group strategy-proof and efficient mechanisms. Nevertheless, a wide variety of mechanisms fit into this category including deferred acceptance \cite{GaSh62} and trading cycles mechanisms \cite{PyUn17}. 
 
\section{Characterizations for Specific Constraints} \label{sec: specific constraints}

We now turn to a number of specific families of constraints. In each case, we apply Theorems \ref{two agent} and \ref{N agent combined} to recover the class of group strategy-proof and efficient mechanisms.

\subsection{Social Choice}

We first turn to the social choice problem and prove the celebrated Gibbard-Satterthwaite Theorem \cite{Gibbard73,Satterthwaite75}. The goal is to give a simple demonstration of how to apply our characterization results.

\begin{lemma} \label{GSP=ISP for social choice}
Let $C$ be the social choice constraint, i.e. $C=\{(a_{i})_{i\in N} \sst a_{i}=a_{j}\text{ for all }i,j\in N \}$ then a mechanism $f:\pp \rightarrow C$ is group strategy-proof if and only if it is individually strategy-proof.
\end{lemma}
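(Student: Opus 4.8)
The plan is to reduce everything to Proposition \ref{GSP equivalences}, which establishes that group strategy-proofness is equivalent to the conjunction of individual strategy-proofness and nonbossiness. Group strategy-proofness trivially implies individual strategy-proofness (the singleton coalitions are among the coalitions $M$ permitted in the definition), and this direction uses nothing about the constraint. So the only content lies in the converse: I must show that an individually strategy-proof $f:\pp\to C$ is group strategy-proof. By Proposition \ref{GSP equivalences} it therefore suffices to prove that \emph{every} feasible mechanism into the social choice constraint is automatically nonbossy.

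The nonbossiness argument is immediate from the structure of the diagonal constraint. Fix a profile $\succsim$, an agent $i$, and an alternative report $\succsim'_i$, and suppose $f_i(\succsim'_i,\succsim_{-i})=f_i(\succsim)$. Because $f$ maps into $C$, at any profile all agents receive a common object; in particular $f_j(\succsim)=f_i(\succsim)$ and $f_j(\succsim'_i,\succsim_{-i})=f_i(\succsim'_i,\succsim_{-i})$ for every $j\in N$. Chaining these equalities with the hypothesis yields $f_j(\succsim'_i,\succsim_{-i})=f_j(\succsim)$ for all $j$, that is $f(\succsim'_i,\succsim_{-i})=f(\succsim)$. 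Hence nonbossiness holds for a structural reason rather than as a substantive restriction: a single common object pins down the entire allocation, so an agent cannot alter anyone else's assignment without simultaneously altering her own.

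Combining these observations, individual strategy-proofness together with this automatic nonbossiness delivers group strategy-proofness through Proposition \ref{GSP equivalences}, completing the converse and hence the equivalence. There is no genuine obstacle in this argument; the only step doing any work is the observation that the diagonal constraint collapses each agent's outcome to a shared value. I would present the proof essentially as above, stating explicitly that the forward implication requires no appeal to the constraint while the reverse implication rests entirely on the collapse that forces nonbossiness.
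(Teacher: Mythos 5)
Your proposal is correct and follows exactly the paper's own argument: the paper likewise notes that nonbossiness is immediate for the diagonal constraint (your chaining of equalities is precisely the reason) and then invokes Proposition \ref{GSP equivalences} to convert individual strategy-proofness plus nonbossiness into group strategy-proofness. Nothing further is needed.
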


We can then apply our main characterization results to the special case of the diagonal social choice constraint to derive that all group strategy-proof and onto mechanisms are dictatorships, which by virtue of Lemma \ref{GSP=ISP for social choice} is equivalent to the Gibbard--Satterthwaite Theorem.

\begin{theorem}[Gibbard--Satterthwaite] \label{GS Theorem}
Let $C$ be the social choice constraint. If $\vert \obj \vert >2$ and $f:\pp\rightarrow C$ is surjective and group strategy-proof then it is dictatorial.\footnote{In fact, we only need that $\vert im(f)\vert >2$ in which case we could drop items never allowed and recover the same statement.}
\end{theorem}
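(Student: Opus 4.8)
The plan is to derive Gibbard--Satterthwaite as a corollary of the two-agent characterization (Theorem~\ref{two agent}) together with the marginal-mechanism reduction (Theorem~\ref{N agent}), using the social choice constraint $C=\{(a_i)_{i\in N}\sst a_i=a_j\ \forall i,j\}$. First I would observe that, by Lemma~\ref{GSP=ISP for social choice}, for the diagonal constraint a strategy-proof map is automatically group strategy-proof (every mechanism is forced to be nonbossy since changing one agent's object changes everyone's), and that a surjective (equivalently onto, equivalently Pareto efficient in the sense of unanimous) mechanism into $C$ has image equal to the full diagonal. So it suffices to show that a group strategy-proof, surjective $f:\pp\to C$ is dictatorial whenever $|\obj|>2$.

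The core step is to analyze the two-agent marginal mechanisms. Fix any pair $\{i,j\}$ and any profile $\succsim_{N\setminus\{i,j\}}$ of the remaining agents; by Theorem~\ref{N agent} the marginal mechanism $f^{\{i,j\}}_{\succsim_{N\setminus\{i,j\}}}$ is group strategy-proof, hence (being two-agent) strategy-proof and Pareto efficient on its image, so by Theorem~\ref{two agent} it is a local dictatorship for the marginal constraint $C^{\{i,j\}}$. Here the key geometric fact is that the social choice constraint projects to a diagonal again: $C^{\{i,j\}}$ is the diagonal in $\obj^{\{i,j\}}$, so $\bar C^{*}$ is the entire off-diagonal. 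As the excerpt explains via Figure~\ref{two-agent example 2}, when $|\obj|>2$ one can travel between any two off-diagonal cells changing a single coordinate at a time while staying off the diagonal, so the relation $T$ has a \emph{single} block. Theorem~\ref{two agent} then forces that block to be assigned to one fixed agent, i.e. the marginal mechanism is a (genuine, global) dictatorship of either $i$ or $j$.

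Having established that each two-agent marginal is dictatorial, I would stitch these local dictators into a single global dictator. The idea is to show the dictator identity is consistent and independent of the fixed outside profile: if agent $i$ dictates over $j$ in one pairwise marginal, this cannot flip to $j$ dictating over $i$ as we vary $\succsim_{N\setminus\{i,j\}}$ or as we move to another pair, because such an inconsistency would let a suitable profile produce a non-unanimous outcome or a profitable manipulation, contradicting strategy-proofness and surjectivity. Concretely I would define a tournament on $N$ recording, for each pair, who dictates, argue this relation is well-defined (constant across outside profiles) and transitive, and conclude it has a unique maximal element $d$; then at every profile the chain of pairwise dictatorships forces $f(\succsim)$ to equal the common object $\tau(\succsim_d)$, so $d$ is a global dictator. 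The $|\obj|=2$ case is exactly where the single-block argument breaks---$T$ splits into two blocks---which is why the hypothesis $|\obj|>2$ is needed and recovers the classical cardinality requirement.

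The main obstacle I anticipate is the stitching step: transferring the \emph{pairwise} dictator assignments into a single globally consistent dictator while ruling out cyclic or profile-dependent reversals. Verifying that the dictator in the marginal $f^{\{i,j\}}_{\succsim_{N\setminus\{i,j\}}}$ does not depend on $\succsim_{N\setminus\{i,j\}}$, and that the induced relation on $N$ cannot contain a $3$-cycle, is where the real work lies; the natural tool is to use group strategy-proofness (equivalently Maskin monotonicity from Proposition~\ref{GSP equivalences}) to propagate a fixed outcome across profiles, together with Lemma~\ref{PE image} to pin down images. Everything else---reducing to two agents, invoking the single-block structure of the diagonal, and reading off dictatorship from Theorem~\ref{two agent}---should be comparatively routine given the earlier results.
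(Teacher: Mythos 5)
Your core step contains a genuine gap: you conflate the projection $C^{\{i,j\}}$ of the constraint with the image of the marginal mechanism. Group strategy-proofness of $f$ together with Lemma \ref{PE image} gives you that $f^{\{i,j\}}_{\succsim_{-ij}}$ is strategy-proof and Pareto efficient \emph{on its image} $I^{ij}(\succsim_{-ij})$, and Theorem \ref{two agent} therefore applies with the constraint taken to be that image --- not the full diagonal $C^{\{i,j\}}$. Surjectivity of the grand mechanism does not make any marginal mechanism surjective onto the diagonal: if $f$ is a dictatorship by some agent $k\notin\{i,j\}$, then $f^{\{i,j\}}_{\succsim_{-ij}}$ is constant. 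So your claim that every two-agent marginal is a genuine dictatorship of $i$ or $j$ is false (it fails exactly for all pairs avoiding the true dictator), the single-block geometry of the full off-diagonal is never available at the marginal level, and the tournament in your stitching step is not well-defined. Relatedly, the hypothesis $\vert \obj \vert >2$ cannot be cashed in the way you describe, since the marginal images may have one or two elements no matter how large $\obj$ is.

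The missing idea --- and the actual locus of work in the paper's proof --- is a preliminary step showing that \emph{some} pair of agents and \emph{some} outside profile give $\vert I^{ij}(\succsim_{-ij})\vert \geq 3$. The paper argues by contradiction: if every option set is a singleton then $f$ is constant, contradicting surjectivity; if some option set is exactly $\{a,b\}$, then the two-agent characterization leaves (up to symmetry) only two marginal mechanisms (dictatorship by one agent, or ``$a$ iff both prefer $a$ to $b$''), and in either case one can give successive agents preferences in $P^{\uparrow}\left[c,a,b\right]$ for a third object $c$ and propagate the outcome in $\{a,b\}$ from pair to pair, ending at a profile where every agent strictly prefers $c$ to the chosen outcome --- contradicting Lemma \ref{PE image} together with surjectivity. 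This is where $\vert \obj \vert>2$ (really $\vert im(f)\vert>2$) does its work. Only after this step does one marginal mechanism become a genuine dictatorship, and the global stitching is then much simpler than your proposed tournament: since agent $1$ (say) dictates the pair $\{1,2\}$, changing agent $2$'s report cannot change the outcome; the option set $I^{1,3}$ then contains $I^{1,2}$, so it has at least three elements and a single block, and its dictator must again be agent $1$; iterating agent by agent shows $f(\succsim)=\max_{\succsim_{1}}I^{1,2}(\succsim_{-1,2})$ everywhere, with no transitivity or acyclicity argument over pairs needed.
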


We leave the formal proof to the appendix but we sketch the argument here. First observe that some two-agent marginal mechanism has at least three outcomes. By way of contradiction, suppose all have at most two. Some marginal mechanism, say for agents $1$ and $2$ when others report $(\succsim_3, \ldots, \succsim_n)$, has exactly two outcomes since otherwise the whole mechanism is single-valued. A simple consequence of Theorem \ref{two agent} that we discussed earlier is that the marginal mechanism with two outcomes ($a$ and $b$) is either a dictatorship (say of agent $1$)  or a ``veto'' mechanism where one object (say $b$) is the default unless both prefer $a$. Suppose player $2$'s preference $\succsim^*_2$ has  $c \succ^*_2 a \succ^*_2 b$. Then agent $1$'s preference is followed in both dictatorship and the veto mechanism. But the marginal mechanism for agents $1$ and $3$ when others report $(\succsim^*_2, \succsim_4, \ldots, \succsim_n)$ must also have $a$ and $b$ as outcomes. Repeating the argument across agents, $a$ or $b$ is implemented even though all agents prefer $c$, violating group strategy-proofness.

Any two-agent mechanism (say for agents $1$ and $2$) with three or more outcomes is dictatorial (say for $1$) by visual inspection. Then $2$'s preference is irrelevant, so there also exists a $(1,3)$-marginal mechanism with at least three outcomes. Agent $1$ must still be dictator, so $3$'s preference is irrelevant. Repeating across agents, no preference matters besides $1$'s so she is a global dictator.

Interestingly, the group strategy-proof and Pareto efficient mechanisms for the roommates problem and the social choice problem are the same: sequential dictatorships. For the roommates problem, there is enough flexibility in the constraint that a sequential dictatorship still has room for the those who are second or later in the picking order to have nontrivial choices. In the social choice problem, all sequential dictatorships are (simple) dictatorships because agents picking second or later have no choices to make because the first dictator's choice determines the entire allocation profile. So our model presents a unified view of both problems as allowing only sequential dictatorships.

\subsection{The Roommates Problem}

We now apply our general results to the canonical roommates problem. 
In the roommates problem, an even number of agents need to be paired as roommates. Each agent has a strict preference over the other agents as roommates. We assume that all agents prefer to be matched rather than be unmatched, so we do not consider problems related to individual rationality. As discussed earlier, we can model this in our environment by letting $\obj=N$ and using the constraint $$C=\{\mu:N\rightarrow N \sst \mu(i)\neq i \text{ for all }i\text{ and }\mu\circ \mu=id\}$$ where $id$ is the identity ($id(i)=i$ for all $i$). Any feasible mechanism for this constraint will be called a \textbf{roommates mechanism}. As mentioned in the introduction, the literature on the roommates problem has focused on stable matching and there is little known about incentives and efficiency for this problem.

Theorem \ref{Roommates_Characterization} characterizes all group strategy-proof and Pareto efficient roommates mechanisms. This is akin to the Gibbard--Satterthwaite Theorem that demonstrates all such mechanisms are dictatorships for the social choice problem and related to the recent result of \citeasnoun{PyUn17} that characterizes all such mechanisms for the house allocation problem. We settle this question for the roommates problem, and show that all mechanisms with these properties are sequential dictatorships.

\begin{theorem} \label{Roommates_Characterization}
A roommates mechanism is group strategy-proof and Pareto efficient if and only if it is a sequential dictatorship.
\end{theorem}

While the proof of Theorem \ref{Roommates_Characterization} is involved and therefore only fully described in the Appendix, we briefly sketch some of the main ideas here. Suppose that $f$ is a group strategy-proof and Pareto efficient roommates mechanism. Consider any pair $i$ and $j$ and any preference profile $\succsim_{-ij}$. Let $X_{i}$ and $X_{j}$ denote the set of objects (partners) with whom $i$ and $j$ cannot be matched in the marginal mechanism $f_{\succsim_{-ij}}$. Lemma \ref{dumb objects dont matter} implies that $X_i$ and $X_j$ can be ignored in the  marginal mechanism. For this discussion, assume that it is possible for $i$ and $j$ to be matched with each other in this marginal mechanism. The collection of infeasible allocations in the marginal constraint is a superset of those shown in the left panel of figure \ref{roommates_1}, but by assumption does not contain the allocation $(j,i)$.

\begin{figure}[h]
    \centering
    \includegraphics[scale=0.6]{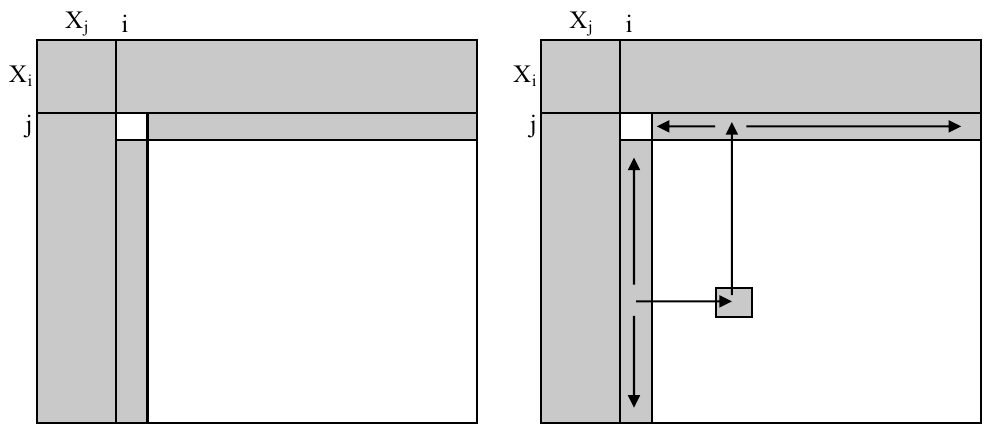}
    \caption{The marginal constraint $I (\succsim_{-ij})$}
    \label{roommates_1}
\end{figure}

Notice that if there are any infeasible allocations in the set $(\obj - X_{i}-\{j\})\times (\obj - X_{j}-\{i\})$ then the graph $\Gamma(I_{\succsim_{-ij}})$ of this marginal constraint is totally connected and there must be a single dictator that is common to all infeasible allocations, as illustrated in the right panel of figure \ref{roommates_1}. Otherwise, if the infeasible space is not connected, then the infeasible space is as illustrated in the left panel of figure \ref{roommates_1}. For that case, the picture strongly resembles the two-agent social choice problem with two objects. We discussed the possible mechanisms for that case after figure \ref{two-agent examples}, showing that these are either dictatorship or unanimity/veto mechanisms with a default option. From Theorem \ref{two agent}, we know that each of the two disconnected components must be assigned a dictator, leaving four possible mechanisms as illustrated in figure \ref{roommates_2}.

\begin{figure}[h]
    \centering
    \includegraphics[scale=0.6]{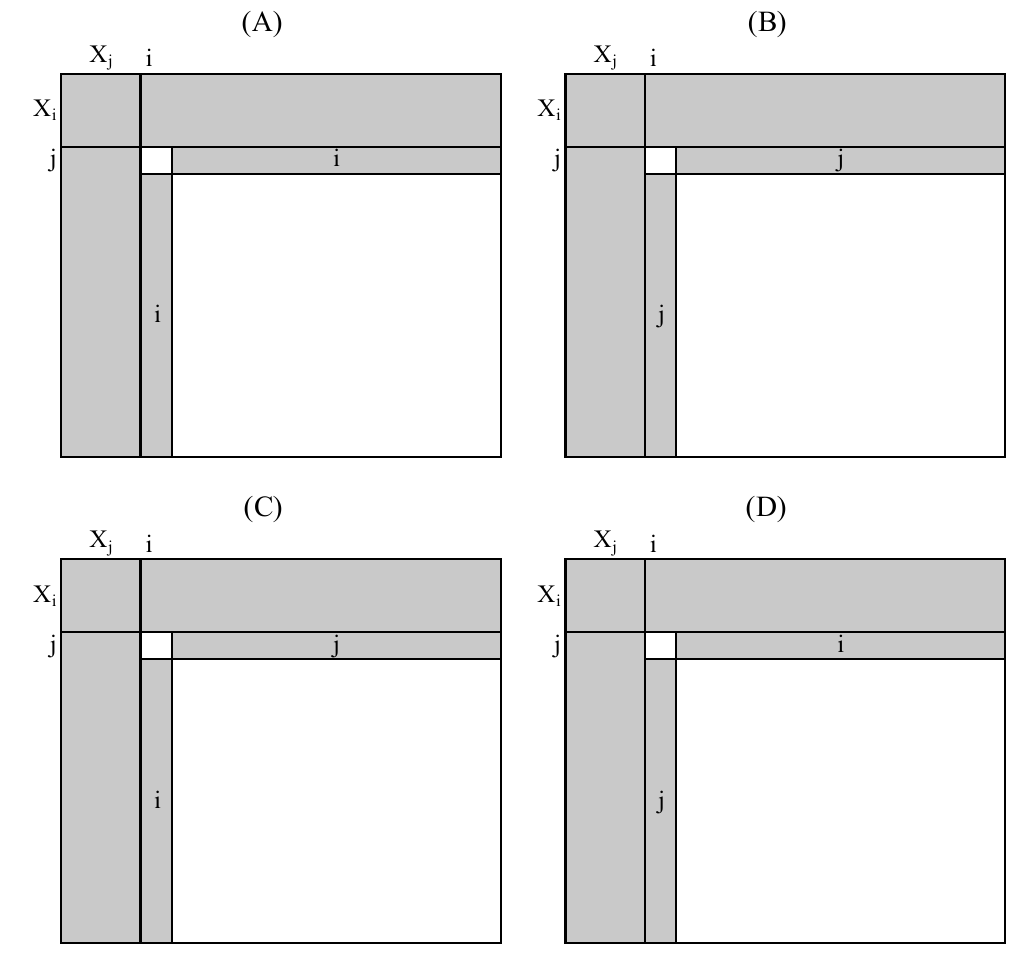}
    \caption{The possible marginal mechanisms $f_{\succsim_{-ij}}$}
    \label{roommates_2}
\end{figure}

Panels (A) and (B) assign the same dictator to both regions and are therefore standard dictatorships. Panels (C) and (D) are veto mechanisms. In (C), $i$ and $j$ are matched together if either top-ranks the other. This is like a veto mechanism with the default option being ``match together.'' In (D), they match  only if both $i$ and $j$ top-rank one another, which is like a veto option with the opposite default of ``do not match together.'' The proof proceeds by induction on the number of agents. The specific steps involve verifying that mechanisms (C) and (D) are not possible for the two-agent marginal mechanisms because they are precluded by the structure of the grand constraint across all agents in the roommates problem. While this still requires work, it greatly simplifies the problem since we are required only to rule out this specific type of submechanism. We mention this here as an example of how central the two-agent characterization can be for proving results with many agents.

Although the results in this paper are generally unrelated to stability, this one does speak to a stability. In fact, it immediately exposes a tension between incentives and stability. As mentioned, an important feature of the roommates problem is the lack of stable outcomes. A standard escape to find positive results is to demand something weaker than standard stability. For example, one relaxation only requires that pairs of agents where each ranks the other as her favorite must be matched, substantially reducing the set of relevant blocking pairs. This weaker stability condition is called ``mutually best'' by \citeasnoun{Toda06} and ``pairwise unanimity'' by \citeasnoun{tak10}. However, sequential dictatorships violate even this very weak form of stability. So a corollary of Theorem \ref{Roommates_Characterization} is that group strategy-proofness and Pareto efficiency are incompatible with even this weak stability notion, exposing a strong tension between incentives and stability for the roommates problem. This negative observation for the roommates problem is not new; in fact, this corollary of our result can also be implicitly derived from Theorem 2 of \citeasnoun{Takamiya13} without an explicit characterization of group strategy-proofness.\footnote{We thank Yuichiro Kamada for pointing this out to us.}   We mainly provide this result to show that our results regarding incentives and efficiency can indirectly provide insights into stability.
\begin{corollary}[\citeasnoun{Takamiya13}]
No group strategy-proof and Pareto efficient roommates mechanism can guarantee that mutual top choices are matched.
\end{corollary}

Finally, we note that here we do not allow for the case of self-assignment, where an agent can be assigned to herself, which can be interpreted as either exiting the market or being assigned a single room. In route to their findings for two-sided matching, \citeasnoun{root2023royal} examine this generalization of the one-sided matching problem.

\subsection{Multiple Assignment}

Consider the problem where there is a finite set $H$ of ``houses" and each agent can own up to $k$ houses. In this case $\obj$ is the set of subsets of $H$ of size $k$ or less. We allow agents to have arbitrary preferences over these subsets. An allocation $(s_i)_{i\in N}$ is feasible if for every $i$ and $j$, $s_i\cap s_j = \emptyset$. The house allocation problem is the special case where $k=1$.  In the special case where $k=2$, there are four houses $H=\{a,b,c,d\}$ and there are two agents $1$ and $2$, the constraint is shown in Figure \ref{fig: combinatorial assignment}. 
\begin{figure}[h]
    \centering
    \includegraphics[scale=0.6]{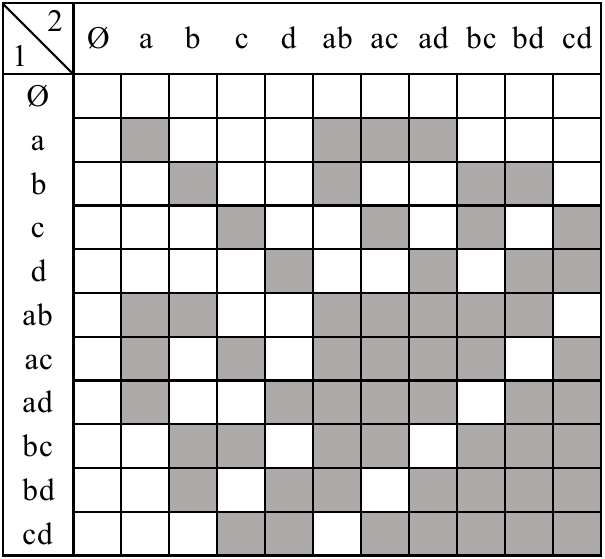}
    \caption{The constraint for the multiple assignment problem for two agents where each can each be assigned up to two objects from the set $\{a,b,c,d\}$.}
    \label{fig: combinatorial assignment}
\end{figure}
Applying Theorem \ref{two agent} to this particular constraint, simple visual inspection verifies that the entire infeasible space is connected, so the only strategy-proof and efficient mechanisms are sequential dictatorships, recalling that dictatorship and sequential dictatorship are equivalent with two agents. This turns out to be generally true. We use our prior characterizations to prove following characterization for any number of agents and any $k\geq 2$.

\begin{theorem}
If $k\geq 2$ all group strategy-proof and Pareto efficient mechanisms are sequential dictatorships. 
\end{theorem}

\begin{proof}
We will prove the result by induction over the number of agents. Our base case begins with just two agents $1$ and $2$. If $\vert H \vert =1 $, there is a single infeasible allocation, so the result holds immediately by Theorem \ref{two agent}. Now suppose that $\vert H \vert \geq 2$ and that $(s_1,s_2)$ and $(t_1,t_2)$ are infeasible. We need to show that the same local dictator will be assigned to both $(s_1,s_2)$ and $(t_1,t_2)$. Let $x$ be a house in $s_1\cap s_2$ and $y$ be a house in $t_1\cap t_2$ (we don't rule out the case that $x=y$). Then every allocation in the sequence $(s_1,s_2),(s_1,\{x,y\}),(\{x,y\},\{x,y\}),(\{x,y\},t_2),(t_1,t_2)$ is infeasible (If $x=y$, the set $\{x,y\}=\{x\}=\{y\}$). Furthermore, each infeasible allocation differs from the previous one by a single agent's allocation. In light of Theorem \ref{two agent}, any Pareto efficient and strategy-proof mechanism must have a single dictator assigned to the entire constraint. 

We proceed by induction on the number of agents. Suppose that the theorem holds for $n$ agents. Fix a group strategy-proof and efficient mechanism $f$ for the $n+1$ agents $\{0,1,2,\dots, n\}$. It will be enough to show that a single agent always gets their top choice since after this agent gets their top choice, we have a sub-problem for the remaining agents which is covered by the induction assumption. We establish that at least one agent must get their top choice with a series of facts, each of which rely on the induction assumption:

\textbf{Fact 0:} For any preference profile $\succsim$ where some agent $i$ top-ranks $\emptyset$, $i$ is matched with $\emptyset$  by $f$ and all other agents are matched via some sequential dictatorship. The picking order is independent of $i$'s ranking.

\textit{Proof of fact 0:} $i$ must be matched with $\emptyset$ by Pareto efficiency. If not, there is a Pareto improvement where $i$ is matched with $\emptyset$ and all other agents' allocations are unchanged. The marginal mechanism for agents other than $i$ is now a sequential dictatorship by the induction assumption. By group strategy-proofness the picking order is independent of $i$'s preference. That is, the picking order does not depend on $i$'s ranking of alternatives below $\emptyset$. 

\textbf{Fact 1} There is an agent $i$ such that for any $x\in \obj$ and any preference profile where all agents top-rank $x$ and second-rank $\emptyset$, $i$ gets $x$.

\textit{Proof of Fact 1:} By Pareto efficiency at every preference profile $\succsim$ described above exactly one agent gets assigned $x$ and the other agents are assigned $\emptyset$. Suppose that $\succsim$ is a profile where all agents top-rank $x$ and second-rank $\emptyset$ and $\succsim'$ is a profile where all agents top-rank $y$ and second-rank $\emptyset$. Since there are at least three agents, at least one agent $j$ is assigned $\emptyset$ at both $\succsim$ and $\succsim'$. However, by group strategy-proofness, if $j$ pushes $\emptyset$ to the top of their ranking, no one's assignment should change. By Fact $0$, this implies that there is an agent $i$ who gets $x$ at $\succsim$ and $y$ at $\succsim'$.

\textbf{Fact 2}
Let $i$ be the agent guaranteed to exist by Fact $1$. For any preference profile $\succsim$ where at least one agent gets $\emptyset$, $i$ must get their top choice. 

\textit{Proof of Fact 2:}
Let $j$ be the agent who gets $\emptyset$. By group strategy-proofness, the outcome doesn't change if $\succsim_{j}$ is changed to a profile which top-ranks $\emptyset$. However in this case, the marginal mechanism for the agents other than $j$ must be a sequential dictatorship by the induction assumption. The only agent who can be the first dictator in this mechanism is agent $i$ by Fact $1$. 

\textbf{Fact 3}
Let $i$ be the agent guaranteed to exist by Fact $2$. Let $j$ be the agent who is the first dictator in the marginal mechanism when $i$ top-ranks $\emptyset$. This agent exists by Fact $0$. For any profile $\succsim_{-ij}$ the marginal mechanism for $i$ and $j$ is a dictatorship. 

\textit{Proof of Fact 3:}
Let $g$ be the marginal mechanism for $i$ and $j$ at $\succsim_{-ij}$. Now consider the marginal option set $I(\succsim_{-ij})$. By definition, $I(\succsim_{-ij})$. Note that for any $x\in \obj$ if $i$ top-ranks $\emptyset$ and $j$ top-ranks $x$ then $j$ gets $x$. Likewise if $i$ top-ranks $x$ and $j$ top-ranks $\emptyset$ then $i$ gets $x$ by Fact 2. This implies that all outcomes are individually-feasible for both agents. Take any $(x,y)$ and $(x',y')$ which are both outside of $I(\succsim_{-ij})$ (so that both are infeasible for $i$ and $j$ in the marginal mechanism $g$). Now $(y,y)$ and $(y',y')$ are both in the complement of $C^{ij}$ so are both in the complement of $I(\succsim_{-ij})$. Let $\Gamma$ be the graph whose vertices are the elements of $\bar{I(\succsim_{-ij})}$ and where any two vertices $(u,v)$ and $(u',v')$ are connected if $u=u'$ or $v=v'$. We have already established that $(y,y)$ and $(y',y')$ are in the same connected component of $\Gamma$ in the discussion of the base case. Furthermore, $(x,y)$ is connected to $(y,y)$ and $(x',y')$ is connected to $(y',y')$. Thus $(x,y)$ and $(x',y')$ are in the same connected component of $\Gamma$. Since both were arbitrary, $\Gamma$ is connected and, by Theorem \ref{two agent} we get the desired result. 

Now we are ready to finish the proof. Let $i$ be the agent described in Fact $2$ and $j$ the agent described in Fact $3$. Let $\succsim$ be an arbitrary profile. Let $\succsim_{j}'$ top-rank $\emptyset$. Then $f(\succsim_j',\succsim_{-j})(i)$ is $i$'s top-ranked choice by Fact $2$. Now since the $i,j$-marginal mechanism is a dictatorship by Fact $3$, $j$'s preference cannot affect $i$'s allocation so that $f(\succsim)(i)$ is also $i$'s top-ranked alternative. 

\end{proof}

While to our knowledge this exact result is novel, similar results for have been observed for nearby settings. \citeasnoun{Papai01} proved the result for the special case where $k = H$, that is, when there is no cap on the number of objects an agent can own. \citeasnoun{Hatfield09} proved the result for the related case where each agent can have exactly $k$ houses and no fewer, while our model allows agents to have fewer than $k$ objects.\footnote{To be more precise, both papers show sequential dictatorship is the unique individually strategy-proof, nonbossy, and Pareto efficient mechanism. However, since individual strategy-proofness and nonbossiness are equivalent to group strategy-proofness in our setting, these are the same conditions.}

While the claim itself is closely related to known results, the argument we invoke is different. We reach this conclusion by analyzing the structure of specific two-agents mechanisms. Our main point here is to show how sequential dictatorship is linked across three seemingly disparate settings: social choice, one-sided matching, and multiple assignment. What ties the proofs together across settings is the importance of two-agent marginal mechanisms in understanding the grand mechanism and then invoking our two-agent result to understand the structure of the marginal mechanisms. In all the proofs, a key step is to invoke the structure of the constraint to show that the existence of a dictator for particular two-agent marginal mechanisms implies sequential dictatorship for the grand mechanism.

\section{Conclusion and Future Directions}

Our results provide a technique for studying constrained allocation problems under strong efficiency and incentive properties. While we are able to characterize the full class of mechanisms for many constraints, there are others where an explicit characterization is outstanding. These include school choice and two-sided matching, arguably the most important settings for market design. While we have not explored it here, scheduling problems and many-to-many matching are also promising directions for future research. 

We have focused on group strategy-proofness because of its appealing properties. It is a strong incentive criterion and is equivalent to other common desiderata including Maskin monotonicity (see Appendix section \ref{sec: preliminary obs}). However, we have also seen that in many important settings, no satisfactory group strategy-proof mechanisms exist. For these settings, it is natural to weaken group strategy-proofness to strategy-proofness. We hope to explore this in future work. Analytically, non-bosiness greatly simplifies the analysis. When allowing for bossy mechanisms, as the number of agents gets large, there is room for an agent to manipulate the point on the efficient frontier without affecting her own allocation.

Motivated by the mechanisms found here for two agents, in a companion paper we explore a class of mechanisms we call ``local priority mechanisms" \cite{root2023local}. These are greedy mechanisms which iteratively attempt to give all agents their top choice. Whenever this is not possible, a set of agents is specified as ``local compromisers." This construction enables the development of explicit mechanisms for many constraints. In contrast to the mechanisms described here, one can find constraints for which there are group strategy-proof mechanisms which give no agent their top choice. Local priority mechanisms also generalize many important mechanisms in the literature including trading cycles mechanisms \cite{PyUn17} and deferred acceptance \cite{GaSh62}.

\clearpage

\appendix

\section{Appendix}

\small

\subsection{Preliminary Observations}\label{sec: preliminary obs}

We start with a simple result which recasts strategy-proofness in terms of option sets. 
\begin{lemma}[\citeasnoun{Barbera83}] \label{Barberas Lemma}
A mechanism $f: \pp \rightarrow C$ is strategy-proof if and only if there exist nonempty correspondences $g_i : P^{N-1} \rightrightarrows \obj$ such that, for all agents $i$, $f_i (\succsim) = \max_{\succsim_i} g_i (\succsim_{-i})$
\end{lemma}

\begin{proof}
Define $g_{i}(\succsim_{-i})=f_{i}(P,\succsim_{-i})$ then by strategy-proofness, we have $f_{i}(\succsim_{i},\succsim_{-i})=\argmax_{\succsim_{i}}{g_{i}(\succsim_{-i})}.$ Conversely, if $f$ is defined such that there are $\{g_{i}\}$ as in the proposition, then
\begin{equation*}
f(\succsim_{i},\succsim_{-i})=\argmax_{\succsim_{i}}g_{i}(\succsim_{-i}) \succsim_{i} \argmax_{\succsim_{i}'}g_{i}(\succsim_{-i})=f(\succsim_{i}',\succsim_{-i}).
\end{equation*}
\end{proof}

It will be useful to relate group strategy-proofness with two other notions: nonbossiness and Maskin monotonicity.

\begin{definition}
A mechanism $f:\pp \rightarrow C $ is
\begin{enumerate}
    \item \textbf{nonbossy} if, for all $\succsim\in \pp$,  $$f_{i}(\succsim'_{i},\succsim_{-i})= f_{i}(\succsim) \implies f(\succsim'_{i},\succsim_{-i})= f(\succsim). $$
    \item \textbf{Maskin monotonic} if, for all  $\succsim, \succsim'\in \pp$, 
    $$LC_{\succsim_{i}'}\left[f_{i}(\succsim)\right]\supset LC_{\succsim_{i}}\left[f_{i}(\succsim)\right]\text{  for all }i \implies f(\succsim')=f(\succsim).$$
\end{enumerate}
\end{definition}

\begin{proposition} \label{GSP equivalences}
If $f:\pp\rightarrow \alloc$\ the following are equivalent:
\begin{enumerate}[leftmargin=*]
    \item $f$ is group strategy-proof.
    \item $f$ is strategy-proof and nonbossy.
    \item $f$ is Maskin monotonic.\footnote{Note that the image of $f$ may be arbitrary, so the claim is true for any constraint $C \subset \alloc$.}
\end{enumerate}
\end{proposition}

We can now demonstrate the desired implications for the equivalence in turn:

\begin{proof}

$(1)\implies (2)$: Of course any group strategy-proof mechanism is individually strategy-proof. We will show nonbossiness by contradiction. Suppose there is a profile $\succsim$ and an agent $i$ with an alternative announcement $\succsim_{i}'$ such that $f_{i}(\succsim)=f_{i}(\succsim_{i}',\succsim_{-i})$ but for some $j$, $f_{j}(\succsim)\neq f_{j}(\succsim_{i}',\succsim_{-i})$. Then if $f_{j}(\succsim)\succ_{j} f_{j}(\succsim_{i}',\succsim_{-i})$, the coalition $\{i,j\}$ can improve their outcome at $(\succsim_{i}',\succsim_{-i})$ by announcing $(\succsim_{i},\succsim_{j})$. Otherwise, if  $f_{j}(\succsim)\prec_{j} f_{j}(\succsim_{i}',\succsim_{-i})$, the coalition $\{i,j\}$ can improve their outcome at $\succsim$ by announcing $(\succsim'_{i},\succsim_{j})$.

 $(2)\implies (3)$: Suppose we have two profiles $\succsim, \succsim'\in \pp$  such that
    $$LC_{\succsim_{i}'}\left[f_{i}(\succsim)\right]\supset LC_{\succsim_{i}}\left[f_{i}(\succsim)\right]\text{  for all }i$$
then notice that $f_{1}(\succsim_{1}',\succsim_{2},\dots, \succsim_{n})=f_{1}(\succsim)$ by Lemma \ref{Barberas Lemma} and by nonbossiness we have $f(\succsim_{1}',\succsim_{2},\dots, \succsim_{n})=f(\succsim)$. We can proceed, changing one preference at a time, to show that $f(\succsim')=f(\succsim)$ as desired.

$(3)\implies (1)$: Let $\succsim\in \pp$ and $\succsim'_{A}$ be a candidate deviation for agents in $A$ so that $$f(\succsim'_{A},\succsim_{-A})\succsim_{j}f(\succsim) \text{  for all }j\in A$$
 we will show that this implies $f(\succsim'_{A},\succsim_{-A})=f(\succsim)$.
 For each $j\in A$ construct $\succsim^{*}_{j}$ to be identical to $\succsim_{j}$ except that it ranks $f_{j}(\succsim'_{A},\succsim_{-A})$ first. For any $j\in A$ we have
\begin{equation*}
    \begin{split}
    LC_{\succsim^{*}_{j}}(f_{j}(\succsim'_{A},\succsim_{-A})) &\supset LC_{\succsim_{j}}(f_{j}(\succsim'_{A},\succsim_{-A})) \text{  and} \\
    LC_{\succsim^{*}_{j}}(f_{j}(\succsim)) &\supset LC_{\succsim_{j}}(f_{j}(\succsim)) \\
    \end{split}
\end{equation*}
for all $j$. The first is immediate. To see the second, notice that if $f_{j}(\succsim'_{A},\succsim_{-A})=f_{j}(\succsim)$ then it holds trivially. If instead, $f_{j}(\succsim'_{A},\succsim_{-A})\neq f_{j}(\succsim)$, by assumption we have $f_{j}(\succsim'_{A},\succsim_{-A}) \succ_{j} f_{j}(\succsim)$ and since $\succsim^{*}$ only moves up the position of $f_{j}(\succsim'_{A},\succsim_{-A})$, the second statement holds. However, by Maskin monotonicity, the first statement gives $f(\succsim^{*}_{A},\succsim_{-A})= f(\succsim'_{A},\succsim_{-A})$ and the second gives $f(\succsim^{*}_{A},\succsim_{-A})=f(\succsim)$, so putting them together we get $$f(\succsim'_{A},\succsim_{-A})=f(\succsim^{*}_{A},\succsim_{-A})=f(\succsim)$$ as desired.

\end{proof}

The relationship between group strategy-proofness and Maskin monotonicity was first revealed by the proof of the Muller--Satterthwaite Theorem, which proceeds by showing that either group or individual strategy-proofness is equivalent to Maskin monotonicity for the social choice problem \cite{MuSa77}.\footnote{Recall the Muller--Satterthwaite Theorem: all Maskin monotonic and surjective social choice functions are dictatorial.} This equivalence between group strategy-proofness and Maskin monotonicity was then extended to other problems as well, including to house allocation by \citeasnoun{Svensson99} and for two-sided matching by \citeasnoun{Takamiya01}. \citeasnoun{Takamiya03} unified these observations in a general statement for all indivisible-good economies without externalities that also applies to our model, and should be credited for the equivalence between (1) and (3) in Proposition \ref{GSP equivalences}.

The importance of nonbossiness in relating group and individual incentives was observed for the house allocation problem by \citeasnoun{Papai00}, who proved that individual and group strategy-proofness are equivalent for nonbossy rules. Proposition \ref{GSP equivalences} pushes her observation to more general allocation problems with arbitrary constraints. \citeasnoun{Thomson16} surveys the literature on nonbossiness and its applications, and reviews many specific environments where group and individual incentives coincide. Proposition \ref{GSP equivalences} at this level of abstract generality was also independently proved in the main result of \citeasnoun{Alva17}, who makes a more general observation that the three conditions in Proposition \ref{GSP equivalences} are equivalent for a broad class of preference domains, so we do not claim precedence for the proposition. We mainly present the result here to highlight its importance of this social-choice logic towards establishing our main characterization results. Our proof is also different from that in \citeasnoun{Alva17} because we are not interested in general preference domains, so our argument is consequently more direct and more limited.

\subsection{Proof of Remark \ref{PE image}}

By way of contradiction, suppose that $f:\pp\rightarrow im(f)$ is group strategy-proof and that there is a profile $\succsim$ and an allocation $(a_{i})_{i\in N}\in im(f)$ such that $a_{i}\succsim_{i}f_{i}(\succsim)$ for all $i$ with at least one strict. By definition, there is an alternative profile $\succsim'$ such that $f(\succsim')=(a_{i})_{i\in N}$ which is a profitable deviation from $\succsim$. \qed

\subsection{Proof of Lemma \ref{dumb objects dont matter}}

Let $\{g_{i}\}_{i\in N}$ be as defined in Lemma \ref{Barberas Lemma} of the text. For each $j$ the preference $\succsim_{j}'$ does not change the relative ranking of the objects in $g_{j}(\succsim_{-j})$ hence we have $f_{j}(\succsim_{j}',\succsim_{-j})=f_{j}(\succsim)$ by Lemma \ref{Barberas Lemma} so by nonbossiness $f(\succsim_{j}',\succsim_{-j})=f(\succsim)$. Repeating this argument one agent at a time gives the result.  \qed

\subsection{Proof of Theorem \ref{single-compromising}}

We show that every group strategy-proof and Pareto efficient mechanism is of the form $f^\alpha$ for some adapted local compromiser assignment $\alpha$. Let $C$ be a unilateral constraint and fix and a group strategy-proof, efficient mechanism $g:\pp \rightarrow C$. Let $a=(a_{i})_{i\in N}$ be infeasible. For every $i$ there is an object $a_{i}'$ such that $(a_{i}',a_{-i})\in C$. Let $\succsim_{i}\in P^{\uparrow}\left[a_{i},a_{i}'\right]$ for each $i$. Since $g$ always returns feasible allocations, there is at least one agent $k$ who doesn't get their top choice at the constructed preference profile $\succsim=(\succsim_{i})_{i\in N}$. However, Pareto-efficiency then implies that $g_{i}(\succsim)=a_{i}$ for all $i\neq k$ and $g_{k}(\succsim)=a_{k}'$. By Maskin monotonicity and Lemma \ref{Barberas Lemma} we have that for any $\succsim_{-k}'$ with $\max_{\succsim'_{j}}\obj=a_{j}$ for all $j\neq k$, $a_{k}\notin g_{k}(\succsim_{-k})$, so that $k$ always compromises when the top choice is $a$. Define $\alpha(a)=k$ (we can do this unambiguously because no other agent always compromises at $a$, e.g. at the profile $\succsim$). Since $a$ was an arbitrary infeasible allocation, we can do the same for any other infeasible allocation to define $\alpha$ on all of $\bar{C}$. Finally, we establish inductively that $f$ is local priority according to $\alpha$. Pick any preference profile $\succsim'$. Start at $a^{1}=(\max_{\succsim_{i}'}\obj)_{i \in N}$. If this is feasible, then $f$ being Pareto efficient implies $g(\succsim')=a^{1}$. Otherwise, it is infeasible, and by the previous argument, we have an agent $k=\alpha(a^{1})$ who must compromise. Replace $\succsim'_{k}$ with the same preference, except that it puts $a^{1}_{k}$ last. By Maskin monotonicity, this cannot affect the outcome of $f$. We therefore repeat the above process at the new profile. This is exactly how the local priority mechanism according to $\alpha$ works, giving the result. 

Now we need to show that $\alpha$ has to satisfy the property that if $\alpha(a)=i$ then for any $(a_{i}',a_{-i})\in \bar{C}$, we have $\alpha(a_{i}',a_{-i})=\{i\}$. However this follows from similar reasoning as in the two-agent case. If, instead $k=\alpha(a_{i}',a_{-i})$ consider the profile $\succsim$ with $\tau(\succsim)=a$ and $\tau_{2}(\succsim_{i})=a_{i}'$ and $\tau_{2}(\succsim_{k})=a_{k}'$ where $(a_{k}',a_{-k})\in C$. We get a violation of Pareto efficiency since the local priority algorithm would make both $i$ and $k$ compromise to their second-best choice, which would be Pareto dominated by $(a_{k}',a_{-k})$.

The fact that this mechanism is group strategy-proof and Pareto efficient is now a simple consequence of Maskin monotonicity and Remark \ref{PE image}. \qed

\subsection{Proof of Lemma \ref{N agent} (N-agent characterization)}

If $f$ is group strategy-proof, the marginal mechanisms are group strategy-proof by definition. For the other direction, suppose that every two-agent marginal mechanism is group strategy-proof. By Proposition \ref{GSP equivalences} it suffices to show that $f$ is individually strategy-proof and nonbossy. Then $f$ is individually strategy-proof since for any $i$ and any profile $\succsim$ we can choose $j\neq i$ and consider the marginal mechanism $f^{i,j}_{\succsim_{-i,j}}$ then in this marginal mechanism $i$ cannot profit from misreporting, hence she cannot in $f$. Now suppose we have $f_{i}(\succsim_{i}',\succsim_{-i})=f_{i}(\succsim)$ and for some $j$, $f_{j}(\succsim_{i}',\succsim_{-i})\neq f_{j}(\succsim)$, either $f_{j}(\succsim_{i}',\succsim_{-i})\succ_{j} f_{j}(\succsim)$ or $f_{j}(\succsim_{i}',\succsim_{-i})\prec_{i} f_{j}(\succsim)$. However, by assumption the marginal mechanism $f^{ij}_{\succsim_{-ij}}$ is group strategy-proof. From the two-agent characterization, no two-agent group strategy-proof mechanism can have this property. \qed

\subsection{Proof of Remark \ref{generalized_serial_dictatorships}}

By Proposition \ref{GSP equivalences}, it suffices to show that sequential dictatorships are strategy-proof and nonbossy. It is clear that for any $\zeta$ the sequential dictatorship for $\zeta$ is individually strategy-proof. Since $\zeta$ only depends on the allocations of earlier dictators, it is also nonbossy. 

To see that it is Pareto efficient, by Remark \ref{PE image} it is enough to establish that its image is exactly $C$. By construction, the image is a subset of $C$. For any feasible allocation $a\in C$ let $\succsim_{i}$ put $a_{i}$ first for all $i$. Then $f(\succsim)=a$ so $im(f)=C$. \qed

\subsection{Proof of Corollary \ref{cor: existence}}
The projection of the house allocation constraint is shown in Figure \ref{two-agent examples}. If there are at least two houses, this can easily be seen to have a graph $\Gamma$ with at least two connected components. 

For the two-sided matching problem, pick a pair of agents $m_1\in M$ and $w_1\in W$. Then $\Gamma(C^{\{w_1,m_1\}})$ has two connected components, namely the infeasible allocations $(m_1,w_k)$ where $k\neq 1$ and the infeasible allocations $(m_l,w_1)$ with $l\neq 1$. 

In the school choice problem, with schools $s$ and $t$ such that $q_s+q_t\leq N$. Construct a non-dictatorial mechanism $f$ as follows. First run serial dictatorship where agents $1,2,\dots, N-2$ pick in order of their index. If the suballocation does not result in exactly one seat remaining at both $s$ and $t$, have agents $N-1$ and $N-2$ pick in order. If $s$ and $t$ both have exactly one seat remaining, use one of the non-dictatorial strategy-proof and efficient mechanisms to match $N-1$ and $N-2$. This construction will result in a non-bossy and strategy-proof mechanism which is group strategy-proof and efficient by Proposition \ref{GSP equivalences} and Remark \ref{PE image}.

\subsection{Proof of Lemma \ref{GSP=ISP for social choice}}

By Proposition \ref{GSP equivalences}, it is enough to show that any strategy-proof mechanism is non-bossy. However, nonbossiness is immediate since all agents are allocated the same object.
\qed

\subsection{Proof of Theorem \ref{GS Theorem} (Gibbard--Satterthwaite Theorem)}

Let $C$ be the diagonal (i.e. the social choice constraint) and $\vert\obj\vert\geq 3$.

From Proposition \ref{GSP equivalences}, it suffices to show that any group strategy-proof mechanism is dictatorial. Note that non-bosiness is immediate for this constraint. We will establish the result in two steps. First, we will show that for some $i,j$ and some profile $\succsim_{-ij}=(\succsim_{k})_{k\neq i,j}$ we have $\vert I^{ij}(\succsim_{-ij})\vert \geq 3$. From the characterization of two-agent mechanisms, we will see that $f^{ij}_{\succsim_{-ij}}$ is dictatorial. We will then show that this implies the entire mechanism is dictatorial.  
\begin{enumerate}
    \item Suppose by way of contradiction that for all $i,j$ and all $\succsim_{-ij}$ we have $\vert I^{ij}(\succsim_{-ij})\vert <3 $. First, note that if for all $i,j$ and all $\succsim_{-ij}$ we have $\vert I^{ij}_{\succsim_{-ij}}\vert =1$ then $f$ is single-valued which contradicts the surjectivity of $f$.\footnote{To see that $f(\succsim)=f(\succsim')$, change one preference at a time. No single change can alter $f$, so we get the result.} Hence there is at least one pair of agents $i,j$ and $\succsim_{-ij}$ such that $\vert I^{ij}(\succsim_{-ij})\vert \geq 2$. For simplicity and without loss, let $i=1$ and $j=2$. By assumption then $\vert I^{ij}(\succsim_{-ij})\vert = 2$ and without loss assume $I^{ij}(\succsim_{-ij})=\{a,b\}$. Then there must be a local dictator assigned to the incompatible pairs $(a,b)$ and $(b,a)$. This leaves (up to symmetry) two marginal mechanisms $\phi_{1}$ and $\phi_{2}$ where $$\phi_{1}(\succsim_{1},\succsim_{2})=
    \begin{cases}
    a &\text{ if } a\succ_{1} b \\
    b &\text{ if } a\prec_{1} b 
    \end{cases}$$
    and 
    $$\phi_{2}(\succsim_{1},\succsim_{2})=
    \begin{cases}
    a &\text{ if } a\succ_{1} b \text{ and } a\succ_{2} b \\
    b &\text{ otherwise } 
    \end{cases}$$
    In the first, agent $1$ is a dictator. In the second, $b$ is chosen by default and $a$ is only chosen if both agents prefer it to $b$. Let $c$ be another object in $\mathscr{O}$. If we let $\succsim_{2}^{*}\in \mathscr{P}^{\uparrow}[c,a,b]$ then in either case we have $f(\succsim_{1},\succsim^{*}_{2},\succsim_{-1,2})=a$ if $a\succ_{1}b$ and $f(\succsim_{1},\succsim^{*}_{2},\succsim_{-1,2})=b$ if $b\succ_{1}a$. We then have that $a$ and $b$ are in $I^{1,3}(\succsim_{2}^{*},\succsim_{4},\dots,\succsim_{n})$. As before we have two possible mechanisms and in either one, if $\succsim_{3}^{*}\in \mathscr{P}^{\uparrow}[c,a,b]$ we have $f(\succsim_{1},\succsim^{*}_{2},\succsim_{3}^{*},\succsim_{4},\dots,\succsim_{n})=a$ if $a\succ_{1}b$ and $f(\succsim_{1},\succsim^{*}_{2},\succsim_{3}^{*},\succsim_{4},\dots,\succsim_{n})=b$ if $b\succ_{1}a$. Continuing in this way, we get a profile of preferences in which all agents prefer $c$, but $c$ is not chosen. Since any group strategy-proof map is efficient on its image we must either have that $c\notin im(f)$ or $f$ is not group strategy-proof. Either way we have a contradiction.
    
    \item From the characterization of two-agent mechanisms, if $|I^{1,2}(\succsim_{-1,2})|\geq 3$ we have a single dictator in the marginal mechanism $f^{ij}_{\succsim_{-ij}}$. For simplicity let $i=1,j=2$ and assume $1$ is the dictator. We will show that for any $\succsim'$, $f(\succsim')=\max_{\succsim'_{1}}I^{1,2}(\succsim'_{-1,2})$. Begin with $f(\succsim'_{1},\succsim_{2},\dots,\succsim_{n})$. The statement holds by assumption. Now since $1$ is the marginal dictator, changing $\succsim_{2}$ to $\succsim_{2}'$ cannot change the outcome. Hence the statement holds for $f(\succsim'_{1},\succsim'_{2},\dots,\succsim_{n})$. Now we have that $I^{1,3}(\succsim_{2}',\succsim_{4},\dots,\succsim_{n})$ contains $I^{1,2}(\succsim_{-1,2})$ as a subset. Hence there either $1$ or $3$ is a local dictator. Clearly it must be $1$. Therefore $3$'s announcement cannot change the outcome, so we have $f(\succsim'_{1},\succsim'_{2},\succsim_{3}',\succsim_{4},\dots,\succsim_{n})=\max_{\succsim'_{1}}I^{1,2}(\succsim_{-1,2})$. Continuing in this way gives the desired result. The assumption that $f$ is surjective implies that $1$ is a dictator. \qed
\end{enumerate}

\subsection{Proof of Theorem \ref{Roommates_Characterization} (Roommates characterization)}
The ``if" direction follows directly from Remark \ref{generalized_serial_dictatorships}.

We will prove the ``only if" direction by mathematical induction. First, by Lemma \ref{dumb objects dont matter}, we may ignore any agents' ranking for infeasibly matching with herself. If $N=2$ there is only one feasible allocation, so every mechanism is trivially a sequential dictatorship. If $N=4$, then the problem is a social choice problem since a single agent's match determines the matches of everyone else. In this case, the result follows from the Gibbard--Satterthwaite Theorem (Theorem \ref{GS Theorem}). Suppose that for all $m<n$ when there are $2m$ agents, all group strategy-proof and Pareto efficient roommates mechanisms are sequential dictatorships. We will show this for $2n$ agents. It will be enough to show that there is an agent $j$ such that $f_{j}(\succsim)=\argmax_{\succsim_{j}}N$ for all $\succsim$, since, conditional on each of $j$'s choices, the remaining $2n-2$ agents need to assigned a roommate and since $f$ is group strategy-proof and Pareto efficient, this marginal mechanism will be group strategy-proof and Pareto efficient so the induction assumption will guarantee that it is a sequential dictatorship.

Let $f$ be a group strategy-proof and Pareto efficient roommates mechanism for $2n$ agents with $n\geq 3$. We will first consider the possible two-agent marginal mechanisms. Let $i\neq j$ and fix a profile $\succsim_{-ij}$ of the other agents. Assume $(j,i)\in I^{ij}(\succsim_{-ij})$, so that it is possible for $i$ and $j$ to match when the other agents announce $\succsim_{-ij}$. For all $k\neq i$, $(j,k)\notin I^{ij}(\succsim_{-ij})$ since $(j,k)$ has $i$ matched to $j$ but $j$ matched to $k$. Likewise, for all $k\neq j$ we have $(k,i)\notin I^{ij}(\succsim_{-ij})$. Define $\chi_{i}=\{x\in N \sst (x,y)\notin I^{ij}(\succsim_{-ij})  \text{ for all }y\in N\}$ and $\chi_{j}=\{y\in N \sst (x,y)\notin I^{ij}(\succsim_{-ij}) \text{ for all }x\in N\}$. Then after possibly permuting the rows and columns, we get a marginal constraint as illustrated in the two panels of figure \ref{roommates_1} in the main text. 
\noindent As usual, we will ignore agents preferences over objects they can never receive.\footnote{In this case, $\chi_{i}$ and $\chi_{j}$ are not possible for $i$ and $j$ to match holding fixed the preferences $\succsim_{-j}$.} If $\left[N-\chi_{i}\cup\{j\}\right]\times \left[N-\chi_{j}\cup\{i\}\right]$ intersects any infeasible point, then the graph $G(I^{ij}(\succsim_{-ij}))$ is totally connected, as illustrated on the right-hand picture of figure \ref{roommates_1}.\footnote{Recall the relation graph $G(C)$ was defined for every two-agent constraint in section \ref{two agent section}.} Therefore there must be a single dictator in the marginal mechanism $f^{ij}_{\succsim_{-ij}}$ by Theorem \ref{two agent} and Lemma \ref{N agent}. Otherwise, every allocation in $\left[N-\chi_{i}\cup\{j\}\right]\times \left[N-\chi_{j}\cup\{i\}\right]$ is feasible or the set is empty. In the latter case $I^{ij}(\succsim_{-ij})$ is a singleton, and obviously there is only one marginal mechanism. In the former case, as a consequence of Theorem \ref{two agent} there are four possible Pareto efficient, strategy-proof marginal mechanisms as illustrated in figure \ref{roommates_2} in the main text.


In panel (A), $j$ is the dictator since $i$ must compromise at every infeasible allocation. In panel (B), $i$ is the dictator. In Panel (C), $i$ and $j$ are matched together if either top-ranks the other and are only unmatched if both $i$ prefers someone in $N-\chi_{i}\cup\{j\}$ and $j$ prefers someone in $N-\chi_{j}\cup\{i\}$. In panel (D), $i$ and $j$ are matched only if both top-rank the other and are unmatched otherwise.

Summarizing, if $(j,i)\in I^{ij}(\succsim_{-ij})$, there are four possible types of mechanisms $f^{ij}_{\succsim_{-ij}}$:

\begin{enumerate}
    \item $f^{ij}_{\succsim_{-ij}}$ is constant and $(j,i)$. In this case, $N-\chi_{i}=\{j\}$ and $N-\chi_{j}=\{i\}$.
    \item $f^{ij}_{\succsim_{-ij}}$ is dictatorial, so $i$ gets their top choice from $N-\chi_{i}$ or $j$ gets their top choice from or $N-\chi_{j}$ and the other agent gets their top choice consistent with the dictators' allocation. Note that in a dictatorial mechanism, the non-dictator cannot affect the option set of the dictator. 
    \item $i$ and $j$ are matched by default, and are unmatched only if both agree. This is shown in panel (C). In this case, all allocations in $\left[N-\chi_{i}\cup\{j\}\right]\times \left[N-\chi_{j}\cup\{i\}\right]$ are feasible.
    \item $i$ and $j$ are unmatched by default and are matched only if both agree. This is shown in Panel (D). In this case, all allocations in $\left[N-\chi_{i}\cup\{j\}\right]\times \left[N-\chi_{j}\cup\{i\}\right]$ are feasible.
\end{enumerate}

In the remainder of the proof, we will often need to show that a given two-agent marginal mechanism is dictatorial. To do that, it will be sufficient to show that it is possible for both agents to match with one another, that it is non-constant (i.e. that there are at least two possible allocations for the two agents holding the other agents' preferences fixed), and that it is not of the third or fourth types. The third type of mechanism is usually easy to rule out. If we can find a preference where one agent top-ranks the other and they are still not matched, it cannot be of type three. Type (4) is somewhat more subtle, but we can rule it out if an agent can match with a second agent even when that agent bottom-ranks the first agent.

Note that if we partition the set of agents into two nonempty even sets $A$ and $N\setminus A$ and if we restrict attention to preferences where the agents in $A$ rank all agents in $A$ over all agents in $N\setminus A$ and likewise the agents in $N\setminus A$ rank themselves above the agents in $A$, then by Pareto efficiency for all such preferences, agents in $A$ are matched to themselves and agents in $N\setminus A$ are matched within their own group. The induction assumption implies that both groups are matched using a sequential dictatorship. The next lemma (whose validity depends on the induction assumption) says that the dictator on either side retain their dictatorship rights if the other agents on their side switch to an arbitrary preference.

\begin{lemma}\label{roommates_lemma}
Let $A$ be a nonempty proper subset of $N$ with an even number of agents and $\vert A \vert\geq 4$. If $\succsim_{N\setminus A}\in \left[P^{\uparrow}(N\setminus A)\right]^{N\setminus A}$, then there is an agent $j\in A$ such that $$f_{j}(\succsim_{A},\succsim_{N\setminus A})=\argmax_{\succsim_{j}}N$$ whenever $\argmax_{\succsim_{j}}N\in A$. Equivalently, $g_{j}(\succsim_{A-\{j\}},\succsim_{N\setminus A})\supset A-\{j\}$ for all $\succsim_{A-\{j\}}$.
\end{lemma}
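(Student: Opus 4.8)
The plan is to take $j$ to be the first dictator of an auxiliary roommates mechanism on the smaller set $A$, obtained by freezing $N\setminus A$ at $\succsim^{*}_{N\setminus A}$, and then to propagate this dictatorship from a convenient restricted domain to all of $\succsim_{A}$. The first move isolates $A$ as a self-contained roommates problem. Consider profiles in which, besides $N\setminus A$ reporting $\succsim^{*}_{N\setminus A}\in[P^{\uparrow}(N\setminus A)]^{N\setminus A}$, every agent in $A$ reports a preference in $P^{\uparrow}(A)$, so both groups top-rank their own side. I claim Pareto efficiency then forces $A$ to match within $A$ and $N\setminus A$ within $N\setminus A$. If not, let $B\subseteq A$ be the agents of $A$ matched into $N\setminus A$ and $B'\subseteq N\setminus A$ their partners; since internal matches come in pairs and $|A|,|N\setminus A|$ are even, $|B|=|B'|$ is even. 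Rematching $B$ internally and $B'$ internally (both possible, being even) leaves everyone else fixed, strictly improves every agent of $B$ (who now gets an $A$-partner) and of $B'$ (who now gets an $N\setminus A$-partner), and changes no one else, contradicting Pareto efficiency.

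On this restricted domain the assignment $\succsim_{A}\mapsto f(\succsim_{A},\succsim^{*}_{N\setminus A})|_{A}$ is thus a matching of $A$ among itself. Using Barber\`{a}'s Lemma \ref{Barberas lemma 2} together with nonbossiness (Proposition \ref{GSP equivalences}), I would first show this restriction is insensitive to how each $A$-agent ranks $N\setminus A$ beneath $A$: if two such reports of a single agent $a$ share the within-$A$ order, then $a$'s match lies in $A$ in both cases and is weakly preferred to the other by strategy-proofness in each, so the two matches coincide, whence nonbossiness fixes the whole allocation. This yields a well-defined roommates mechanism $\bar{g}$ on the full preference domain of the $|A|=2m<2n$ agents of $A$. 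A short check shows $\bar g$ inherits group strategy-proofness, and, by lifting any dominating $A$-matching to a full matching that leaves the $N\setminus A$-internal match untouched, also Pareto efficiency. The induction hypothesis then applies: $\bar g$ is a generalized serial dictatorship, and its first dictator $j=\zeta(\emptyset)$ receives $\max_{\succsim_{j}}A$ at every profile of this restricted domain.

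It remains to drop the restriction, i.e.\ to prove $f_{j}(\succsim_{A},\succsim^{*}_{N\setminus A})=\max_{\succsim_{j}}N$ for arbitrary $\succsim_{A}$ whenever $x:=\max_{\succsim_{j}}N\in A$. By Lemma \ref{Barberas lemma}, $f_{j}(\succsim_{A},\succsim^{*})=\max_{\succsim_{j}}g_{j}(\succsim_{A\setminus\{j\}},\succsim^{*})$, so it suffices to prove $A\setminus\{j\}\subseteq g_{j}(\succsim_{A\setminus\{j\}},\succsim^{*})$ for every $\succsim_{A\setminus\{j\}}$; the previous step gives exactly this when all of $A\setminus\{j\}$ top-ranks $A$. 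I would then deform the reports of $A\setminus\{j\}$ one agent at a time from ``$A$ on top'' toward the target profile. Changing a single agent $a$ (all others fixed at $\succsim_{\mathrm{rest}}$) moves $j$'s option set according to the two-agent marginal $f^{ja}_{\succsim_{\mathrm{rest}}}$, which by Theorem \ref{two agent} is a local dictatorship; in such a mechanism $j$'s option set depends on $a$'s report only through $a$'s top choice. The aim is to show that demoting $A$-partners in $a$'s report (to promote $N\setminus A$-partners) cannot delete any $A$-object from $j$'s option set, so that, starting from the ``maximal-competition'' base profile where the set is already all of $A\setminus\{j\}$, it stays equal to $A\setminus\{j\}$ along the whole deformation.

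The main obstacle is precisely this last propagation step. The reduction and the invocation of the induction hypothesis are routine once the own-side-matching observation is in hand, but extending the dictatorship of $j$ from the domain where the remaining $A$-agents top-rank $A$ to all profiles requires the monotone comparative static on option sets described above. Making it rigorous means tracking exactly which $A$-partners remain feasible for $j$ in each marginal $f^{ja}_{\succsim_{\mathrm{rest}}}$ as the competing agents promote their $N\setminus A$ options, and here one must use the special feature that $N\setminus A$ is frozen at preferences topping $N\setminus A$ (so that a competitor's promotion of an $N\setminus A$-partner does not, via feasibility, rob $j$ of an $A$-partner); this interaction between the local-dictatorship classification and the frozen block is the delicate part of the argument.
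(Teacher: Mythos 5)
Your first two steps are sound and coincide with the paper's own argument: Pareto efficiency forces own-side matching at $\left[P^{\uparrow}(A)\right]^{A}$ profiles, strategy-proofness plus nonbossiness (Proposition \ref{GSP equivalences}) makes the induced $A$-mechanism well defined, group strategy-proof and Pareto efficient, and the induction hypothesis produces a marginal dictator $j$; reducing the remaining claim, via Lemma \ref{Barberas lemma}, to showing $A\setminus\{j\}\subseteq g_{j}(\succsim_{A\setminus\{j\}},\succsim^{*}_{N\setminus A})$ for \emph{arbitrary} $\succsim_{A\setminus\{j\}}$ is also exactly the paper's route. But the step you explicitly leave open---propagating this inclusion from $P^{\uparrow}(A)$ reports to arbitrary reports---is the heart of the lemma, and the tool you propose for it (a ``monotone comparative static'' showing that demoting $A$-partners in a competitor's report cannot delete $A$-objects from $j$'s option set, with careful feasibility bookkeeping against the frozen block $N\setminus A$) is not what closes it. As written, your proof is incomplete precisely where the real work lies.

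The paper closes this step with a cardinality argument applied to the four-type classification of two-agent marginal mechanisms, not with any monotonicity statement. Take the base profile in which every $i\in A\setminus\{j\}$ reports a preference in $P^{\uparrow}(A)$ that top-ranks $j$; there $g_{j}\supseteq A\setminus\{j\}$, a set with at least $3$ elements (this is exactly where $\vert A\vert\geq 4$ enters). Now deform one agent $i$ at a time to her arbitrary target report, maintaining the invariant $g_{j}\supseteq A\setminus\{j\}$. At each step $(j,i)\in I^{ij}$, so by Theorems \ref{two agent} and \ref{N agent} the marginal mechanism $f^{ij}$ is one of four types: constant, $i$ sole dictator, $j$ sole dictator, or the panel-(A) mechanism of Figure \ref{roommates_2}. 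The constant and panel-(A) types give $j$ an option set of size at most $2$, contradicting the invariant; if $i$ were sole dictator then, since $i$'s current report top-ranks $j$ and $(j,i)$ is feasible in the marginal, $i$ would always be matched to $j$, making $j$'s option set the singleton $\{i\}$, again a contradiction. Hence $j$ is the sole dictator of every such marginal, so $i$'s report has \emph{no} effect on $j$'s option set whatsoever---full independence, much stronger than the monotonicity you were aiming for---and replacing $\succsim_{i}$ by anything preserves the invariant; iterating over $A\setminus\{j\}$ finishes the proof. Your observation that in a local dictatorship $j$'s option set depends on $i$'s report only through $i$'s top choice is true but too weak to run this induction: what must be ruled out is that the dictatorship assignment itself ever flips to $i$, and that is forced purely by the size of $j$'s option set, with no reference to how feasibility interacts with the frozen block.
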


\begin{proof}
For notational convenience, let $A=\{1,2,\dots,l\}$ and $N\setminus A = \{l+1,\dots N\}$. Fix a profile $\succsim_{N\setminus A}\in \left[P^{\uparrow}(N\setminus A)\right]^{N \setminus A}$. For any $\succsim_{1}, \dots \succsim_{l} \in P^{\uparrow}(\{1,2,\dots, l\})$, by Pareto efficiency, $f(\succsim_{1},\succsim_{2},\dots, \succsim_{l},\succsim_{N\setminus A})$ will match agents in $\{1,2,\dots, l\}$ with other agents in $\{1,2,\dots, l\}$ and agents in $\{l+1,\dots N\}$ with other agents in $\{l+1,\dots N\}$. Thus the marginal mechanism $f(\cdot,\succsim_{N\setminus A})$ restricted to profiles in $\left[P^{\uparrow}(\{1,2,\dots, l\})\right]^{l}$ gives a roommates mechanism for the agents in $\{1,2,\dots, l\}$. By the group strategy-proofness and efficiency of $f$, the marginal mechanism is also group strategy-proof and efficient. By the induction assumption this marginal mechanism is a sequential dictatorship. Without loss, assume that $1$ is the first dictator. Then we have $g_{1}(\succsim_{2},\dots, \succsim_{l},\succsim_{N\setminus A}) \supset \{2,3,\dots l\}$ for all $\succsim_{2},\dots, \succsim_{l}$ in $P^{\uparrow}(\{1,2,\dots, l\})$. For any $\succsim_{3},\dots, \succsim_{l}$ in $P^{\uparrow}(\{1,2,\dots, l\})$, consider the $1,2$-marginal mechanism. Since $g_{1}(\succsim_{2},\dots, \succsim_{l},\succsim_{N\setminus A})\supset \{2,3,\dots, l\}$ for all $\succsim_{2},\dots, \succsim_{l}$ in $P^{\uparrow}(\{1,2,\dots, l\})$, if $1$ top-ranks $2$ and $2$ announces any preference in $P^{\uparrow}(\{1,2,\dots, l\})$, $1$ and $2$ are matched. Thus $(2,1)\in I^{1,2}(\succsim_{3},\dots, \succsim_{l},\succsim_{N\setminus A}).$ From the considerations above, there are four possibilities for this marginal mechanism. Let $\succsim^{*}_{1}$ top rank $j\neq 2$ and $j\leq l$ and $\succsim_{2}$ in $P^{\uparrow}(\{1,2,\dots, l\})$ top-rank $1$. At this profile, $1$ and $j$ are matched. Hence the $1,2$ marginal mechanism is not constant. Furthermore, it cannot be of type (3), since $1$ is matched with $j$, despite $2$ top-ranking $1$. Let $\succsim_{2}^{*}$ be in $P^{\uparrow}(\{1,2,\dots, l\})$ and top-rank her match at the profile $(\succsim^{*}_{1},\succsim_{2})$. Since $1$ and $2$ are matched when $1$ top-ranks $2$ and $2$ announces $\succsim_{2}^{*}$, the mechanism also cannot be of type (4) (At $\succsim_{2}^{*}$, agent $2$ is top-ranking a feasible match in the $1,2$ marginal mechanism, but $1$ can still match with her). The only possibility left is that the $1,2$-marginal mechanism is dictatorial with agent $1$ as the dictator. Since non-dictators cannot affect the option set of dictators, we get that $g_{1}(\succsim_{2}', \succsim_{3}, \dots, \succsim_{l},\succsim_{N\setminus A})\supset \{2,3,\dots, l\}$ for any $\succsim_{2}'$ and any $\succsim_{3}, \dots, \succsim_{l}$ in $P^{\uparrow}(\{1,2,\dots, l\})$. We could have carried out the above argument with any $i$ in place of $2$, so in fact we have 
\begin{equation*}
g_{1}(\succsim_{2}, \dots, \succsim_{i-1},\succsim_{i}',\succsim_{i+1}, \dots \succsim_{l},\succsim_{N\setminus A})\supset \{2,3,\dots l\}
\end{equation*}
for any $\succsim_{i}'$ and any $\succsim_{2}, \dots, \succsim_{i-1},\succsim_{i+1}, \dots \succsim_{l}$ in $P^{\uparrow}(\{1,2,\dots, l\})$.

The goal is to show that 
\begin{equation*}
g_{1}(\succsim_{2}', \dots, \succsim_{l}',\succsim_{N\setminus A})\supset \{2,3,\dots l\}
\end{equation*}
for all $\succsim_{2}', \dots, \succsim_{l}'$. We will do this by induction. Specifically we will show that if for any $0<q-1< l-1$ and any $A'\subset A-\{1\}$ with $\vert A'\vert=q-1$ we have $g_{1}(\succsim_{A'}',\succsim_{A-A'\cup \{1\}},\succsim_{N\setminus A})\supset \{2,3,\dots l\}$ for any $\succsim_{A'}'$ and any $\succsim_{A-A'\cup \{1\}}$ in $\left[P^{\uparrow}(A)\right]^{A-A'\cup \{1\}}$ then the same holds for any $A'\subset A-\{1\}$ with $q$ agents.

For simplicity, let $A'=\{2,\dots q+1\}$ and pick any $\succsim_{2}',\dots,\succsim_{q+1}'$. By the induction assumption, we have $g_{1}(\succsim_{2}',\dots,\succsim_{q}',\succsim_{q+1},\dots \succsim_{l},\succsim_{N\setminus A})\supset \{2,3,\dots l\}$ for any $\succsim_{2}',\dots,\succsim_{q}'$ and any $\succsim_{q+1},\dots \succsim_{l}$ in $P^{\uparrow}(A)$. Now by the same arguments as above, the $1,q+1$-marginal mechanism at this profile is either of type (2) (i.e. dictatorial) or it is of type (4). Suppose, by way of contradiction, that it is of type (4) and let $\succsim_{q+1}^{*}$ bottom-rank $1$. Then doing so removes $q+1$ from $1$'s option set, but leaves it otherwise the same. Let $\succsim_{1}^{**}$ top-rank $q+1$ and second-rank $q$. From the above discussion, we get that $1$ is matched to $q$ at the marginal profile $(\succsim_{1}^{**},\succsim_{q+1}^{*})$. If we let $\succsim_{q}^{*}\in P^{\uparrow}(A)$ top-rank $1$, then by Maskin-monotonicity, we have 
\begin{equation*}
    f(\succsim_{1}^{**},\succsim_{2}',\dots, \succsim_{q}',\succsim_{q+1}^{*}, \succsim_{q+2},\dots , \succsim_{l},\succsim_{N\setminus A})= \\
    f(\succsim_{1}^{**},\succsim_{2}',\dots, \succsim_{q-1}',\succsim_{q}^{*},\succsim_{q+1}^{*}, \succsim_{q+2},\dots , \succsim_{l},\succsim_{N\setminus A})
\end{equation*}
but on the left we have $1$ is matched to $q$, her second-top choice. By the induction assumption, on the right we should have $q+1$ in $1$'s option set since the agents $q, q+2,\dots, l$ are all announcing a preference in $P^{\uparrow}(A)$, leaving only $q-1$ agents announcing a possibly different preference. This gives a contradiction so we must have that $1$ is the dictator in the $1,q+1$-marginal mechanism.
\end{proof}

We will call agent $j$ in the lemma above, the \textit{marginal dictator}. Having done this, the idea is to partition the agents in two ways. First we consider the partition $\{1,2\}\{3,4,\dots,N\}$. By Lemma \ref{roommates_lemma}, for $\succsim_{1}^{*}\in P^{\uparrow}(2)$ and $\succsim_{2}^{*}\in P^{\uparrow}(1)$ there is a marginal dictator among $\{3,4,\dots,N\}$ which we can assume without loss is agent $3$. Second, we consider the partition $\{1,2,3,4\},\{5,6,\dots,N\}$ and again Lemma \ref{roommates_lemma} says that given $\succsim_{5}^{*},\dots, \succsim_{n}^{*} \in P^{\uparrow}(\{5,\dots,n\})$ , there is a marginal dictator among $\{1,2,3,4\}$. We show that by comparing these two dictators, we can find a single dictator for the whole mechanism.

As above, let $\succsim_{1}^{*}\in P^{\uparrow}(2)$, $\succsim_{2}^{*}\in P^{\uparrow}(1)$ and without loss assume that $3$ is the marginal dictator among $\{3,\dots N\}$. By Maskin-monotonicity, it is also without loss to suppose that both $\succsim_{1}^{*}$ and  $\succsim_{2}^{*}$ bottom-rank $3$. \footnote{Let $\succsim_{1}^{*}\in P^{\uparrow}(2)$, $\succsim_{2}^{*}\in P^{\uparrow}(1)$, by Lemma \ref{roommates_lemma}, we have $g_{3}(\succsim_{1}^{*},\succsim_{2}^{*},\succsim_{4}',\dots , \succsim_{N}')\supset \{4,\dots, N\}$ Let $\succsim_{1}^{**}$ and $\succsim_{2}^{**}$ be the same as $\succsim_{1}^{*}$ and $\succsim_{2}^{*}$ respectively, except both bottom-rank $3$. Let $\succsim_{3}$ top rank $k\in \{4,\dots, N\}$. Then $f_{3}(
\succsim_{1}^{*},\succsim_{2}^{*},\succsim_{3},\succsim_{4}',\dots, \succsim_{N}')=k$ for any $\succsim_{4}',\dots, \succsim_{N}'$. But Maskin-monotonicity then says $f_{3}(
\succsim_{1}^{**},\succsim_{2}^{**},\succsim_{3},\succsim_{4}',\dots, \succsim_{N}')=k$ for any $\succsim_{4}',\dots, \succsim_{N}'$.} Also choose $\succsim_{5}^{*},\dots, \succsim_{n}^{*} \in P^{\uparrow}(\{5,\dots,n\})$. By Lemma \ref{roommates_lemma}, $g_{3}(\succsim_{1}^{*},\succsim_{2}^{*},\succsim_{4}',\dots , \succsim_{N}')\supset \{4,\dots, N\}$ for all $\succsim_{4}',\dots , \succsim_{N}'$. Likewise, for some $i\in\{1,2,3,4\}$, we have $g_{i}(\succsim_{\{1,2,3,4\}-\{i\}}',\succsim_{5}^{*},\dots, \succsim_{n}^{*})\supset \{1,2,3,4\}-\{i\}$ for all $\succsim_{\{1,2,3,4\}-\{i\}}'$. This gives four cases, corresponding to the possible identities of $i$. However, note that $i$ cannot be $4$ since $3$ and $4$ are matched at the profile $(\succsim_{1}^{*},\succsim_{2}^{*},\succsim_{3},\succsim_{4},\succsim_{5}^{*},\dots, \succsim_{n}^{*})$ where $3$ top ranks $4$ regardless of $\succsim_{4}$. Since $1$ and $2$ are so far symmetric, this leaves two cases: $i=1$ (and $i=2$ by symmetry) and $i=3$. 

We will start with the latter case. So we have 
\begin{equation} \label{eq1}
    g_{3}(\succsim_{1}^{*},\succsim_{2}^{*},\succsim_{4}',\dots , \succsim_{N}')\supset \{4,\dots, N\} \text{ for all } \succsim_{4}',\dots , \succsim_{N}'\text{, and}
\end{equation}
\begin{equation}\label{eq2}
    g_{3}(\succsim_{1}',\succsim_{2}',\succsim_{4}',\succsim_{5}^{*},\dots, \succsim_{n}^{*})\supset \{1,2,4\}\text{ for all } \succsim_{1}',\succsim_{2}',\succsim_{4}'
\end{equation}
In particular, $g_{3}(\succsim_{1}^{*},\succsim_{2}^{*},\succsim_{4}', \succsim_{5}^{*}\dots , \succsim_{N}^{*})= N-\{3\}$ for all $\succsim_{4}'$. We need to show $g_{3}(\succsim_{1}',\succsim_{2}',\succsim_{4}', \succsim_{5}'\dots , \succsim_{N}')= N-\{3\}$ for all $(\succsim_{1}',\succsim_{2}',\succsim_{4}', \succsim_{5}'\dots , \succsim_{N}')$. Consider the $3,5$-marginal mechanism at the profile $(\succsim_{1}^{*},\succsim_{2}^{*},\succsim_{4}', \succsim_{6}^{*}\dots , \succsim_{N}^{*})$ for any $\succsim_{4}'$. From equation \ref{eq1} above, $3$ and $5$ are matched whenever $3$ top ranks 5, regardless of $5$'s preference. It is also possible for $3$ to match with $4$ regardless of $5$'s preference. From the discussion about the possible two-agent marginal mechanisms, the only possibility for this marginal mechanism has $3$ as the dictator. In this case, $5$'s announcement cannot affect $3$'s option set. Thus we have $g_{3}(\succsim_{1}^{*},\succsim_{2}^{*},\succsim_{4}', \succsim_{5}',\succsim_{6}^{*}\dots , \succsim_{N}^{*})= N-\{3\}$ for any $\succsim_{4}',\succsim_{5}'$. Repeating this argument one agent at a time implies that \begin{equation} \label{eq3}
    g_{3}(\succsim_{1}^{*},\succsim_{2}^{*},\succsim_{4}',\dots , \succsim_{N}') = N-\{3\} \text{ for all } \succsim_{4}',\dots , \succsim_{N}'\text{, and}
\end{equation}
a symmetric argument shows that
\begin{equation} \label{eq4}
    g_{3}(\succsim_{1}',\succsim_{2}',\succsim_{4}',\succsim_{5}^{*},\dots, \succsim_{n}^{*}) = N-\{3\} \text{ for all } \succsim_{1}',\succsim_{2}',\succsim_{4}'.
\end{equation}

Now we will use equation \ref{eq3} to get the desired result. We will do this by looking at the $1,3$ and $2,3$ marginal mechanisms. Equation \ref{eq3} already implies that these mechanisms cannot be type (4) since $3$ can match with $1$ and $2$ even though both bottom-rank her. The main thing to do is show that the marginal mechanisms are not of type (3). To do this, we will need to show that when $1$ or $2$ switch to top-ranking $3$ they do not force a match.

Let $\succsim_{1}^{**}$ be identical to $\succsim_{1}^{*}$, except that $3$ is top ranked. Define $\succsim_{2}^{**}$ equivalently. Now we want to show that the following three equations hold:
\begin{equation} \label{eq5}
    g_{3}(\succsim_{1}^{**},\succsim_{2}^{*},\succsim_{4}',\dots , \succsim_{N}') = N-\{3\} \text{ for all } \succsim_{4}',\dots , \succsim_{N}'\text{, and}
\end{equation}
\begin{equation} \label{eq6}
    g_{3}(\succsim_{1}^{*},\succsim_{2}^{**},\succsim_{4}',\dots , \succsim_{N}') = N-\{3\} \text{ for all } \succsim_{4}',\dots , \succsim_{N}'\text{, and}
\end{equation}
\begin{equation} \label{eq7}
    g_{3}(\succsim_{1}^{**},\succsim_{2}^{**},\succsim_{4}',\dots , \succsim_{N}') = N-\{3\} \text{ for all } \succsim_{4}',\dots , \succsim_{N}'.
\end{equation}
Since the arguments are all symmetric, we will just show equation \ref{eq5}. From equation \ref{eq4}, we know that  $g_{3}(\succsim_{1}^{**},\succsim_{2}^{*},\succsim_{4}',\succsim_{5}^{*}, \dots , \succsim_{N}^{*}) = N-\{3\}$. Consider the $3,5$-marginal mechanism at the profile $(\succsim_{1}^{**},\succsim_{2}^{*},\succsim_{4}',\succsim_{6}^{*}, \dots , \succsim_{N}^{*})$. Let $\succsim_{3}\in P^{\uparrow}(5,4)$ and $\succsim_{3}''\in P^{\uparrow}(4,5)$. Then we have that $3$ and $5$ are matched at the profile $(\succsim_{3},\succsim_{5}^{*})$ and $3$ and $4$ are matched at the profile $(\succsim_{3}'',\succsim_{5}^{*})$. Thus the $3,5$-marginal mechanism is not constant and if it is dictatorial, $3$ must be the dictator. We must also rule out type (3) and type (4) mechanisms. Let $\succsim_{5}''$ top-rank $3$. In a type (3) mechanism, we would have that $3$ and $5$ are matched at the profile $(\succsim_{3}'',\succsim_{5}'')$. However, going back to the full mechanism, this would imply, by Maskin monotonicity that $$f(\succsim_{1}^{**},\succsim_{2}^{*},\succsim_{3}'',\succsim_{4}',\succsim_{5}'',\succsim_{6}^{*},\dots,\succsim_{N}^{*})=f(\succsim_{1}^{*},\succsim_{2}^{*},\succsim_{3}'',\succsim_{4}',\succsim_{5}'',\succsim_{6}^{*},\dots,\succsim_{N}^{*})$$
however, on the right hand side, we have $3$ matched with $4$ by equation \ref{eq3}. Thus the $3,5$-marginal mechanism cannot be of type (3). Finally, suppose that $\succsim_{5}'''$ ranks agent $3$ last. If the marginal mechanism were type (4), we could not have $3$ and $5$ matched at $(\succsim_{3},\succsim_{5}''')$. However, in a type (4) mechanism, either agent can only remove themselves from the other agents option set. Hence in this case we would have that $3$ is matched to $4$ at $(\succsim_{3},\succsim_{5}''')$. But for the same reasons as above, Maskin monotonicity implies this cannot happen. Hence $3$ is the dictator in the marginal mechanism and $5$'s preference does not affect $3$'s option set so 
\begin{equation*}
    g_{3}(\succsim_{1}^{**},\succsim_{2}^{*},\succsim_{4}', \succsim_{5}', \succsim_{6}^{*}, \dots , \succsim_{N}^{*}) = N-\{3\} \text{ for all } \succsim_{4}',\succsim_{5}'.
\end{equation*}
Repeating this argument one agent at a time gives us equation \ref{eq5}.

Now we claim that equations \ref{eq3} and \ref{eq6}, together imply that 
\begin{equation} \label{eq8}
    g_{3}(\succsim_{1}^{*},\succsim_{2}',\succsim_{4}',\dots , \succsim_{N}') = N-\{3\} \text{ for all } \succsim_{2}',\succsim_{4}',\dots , \succsim_{N}'
\end{equation}
Equation \ref{eq3} says that $3$ has the option to match with $2$, even though $2$ bottom-ranks $3$ by assumption. Equation \ref{eq6} that $3$ has the option to not match with $2$, even if $2$ top ranks her. Thus we can only have $3$ as the marginal dictator in the $2,3$-marginal mechanism at any $\succsim_{1}^{*},\succsim_{4}',\dots, \succsim_{N}'$. Since $2$ cannot affect $3$'s option set, we get eqation \ref{eq8}. Repeating the same arguments with equations \ref{eq5} and \ref{eq7} show that 
\begin{equation} \label{eq9}
    g_{3}(\succsim_{1}^{**},\succsim_{2}',\succsim_{4}',\dots , \succsim_{N}') = N-\{3\} \text{ for all } \succsim_{2}',\succsim_{4}',\dots , \succsim_{N}'
\end{equation}
Finally, by comparing equations \ref{eq8} and \ref{eq9}, we get the desired result that $g_{3}(\succsim_{1}',\succsim_{2}',\succsim_{4}',\dots , \succsim_{N}') = N-\{3\}$  for all $ \succsim_{2}',\succsim_{4}',\dots , \succsim_{N}'$.

Lastly, we come to the case in which $1$ is the marginal dictator among $\{1,2,3,4\}$ at the profile $\succsim_{5}^{*},\dots, \succsim_{N}^{*}$. Our strategy will be to reduce this to the previous case by showing that for some $\succsim_{3}^{\dagger}\in P^{\uparrow}(4)$, $\succsim_{4}^{\dagger}\in P^{\uparrow}(3)$, that $1$ is also the marginal dictator among $\{1,2,5,\dots, N\}$. 

By Lemma \ref{roommates_lemma}, we have 
\begin{equation}\label{eq10}
    g_{1}(\succsim_{2}',\succsim_{3}',\succsim_{4}',\succsim_{5}^{*},\dots, \succsim_{n}^{*})\supset \{2,3,4\}\text{ for all } \succsim_{2}',\succsim_{3}',\succsim_{4}'
\end{equation}

\noindent Let $k\in 5,\dots, N$. As a first step, we want to show that $k\in g_{1}(\succsim_{2}',\succsim_{3}',\succsim_{4}',\succsim_{5}^{*},\dots, \succsim_{n}^{*})\text{ for all } \succsim_{2}',\succsim_{3}',\succsim_{4}'$ and to do this it suffices to demonstrate a single preference profile $(\succsim_{2}'',\succsim_{3}'',\succsim_{4}'',\succsim_{5}^{*},\dots, \succsim_{n}^{*})$ where this holds, since $1$ is the marginal dictator among $\{1,2,3,4\}$. Let $\succsim_{3}$ top-rank $k$. Then $f$ matches $1$ and $2$ and also $3$ and $k$ at the profile $(\succsim_{1}^{*},\succsim_{2}^{*},\succsim_{3},\succsim_{4}',\succsim_{5}^{*},\dots, \succsim_{N}^{*})$ for any $\succsim_{4}'\in P^{\uparrow}(\{3,\dots, N\}).$ Let $\succsim_{2}^{**}$ be the same as $\succsim_{2}^{*}$, except that it top-ranks $3$ and let $\succsim_{3}^{**}$ be the same as $\succsim_{3}$, except that it top-ranks $2$. Since $1$ is the marginal dictator among $\{1,2,3,4\}$, $1$ and $2$ are still matched at the profile $(\succsim_{1}^{*},\succsim_{2}^{**},\succsim_{3}^{**},\succsim_{4}',\succsim_{5}^{*},\dots, \succsim_{N}^{*})$, so by Maskin monotonicity, we have

$$f(\succsim_{1}^{*},\succsim_{2}^{**},\succsim_{3}^{**},\succsim_{4}', \succsim_{5}^{*}, \dots , \succsim_{N}^{*})=f(\succsim_{1}^{*},\succsim_{2}^{*},\succsim_{3}^{*},\succsim_{4}', \succsim_{5}^{*}, \dots , \succsim_{N}^{*})$$
and in particular, $3$ and $k$ are still matched. Now consider the $1,k$-marginal mechanism at $(\succsim_{2}^{**},\succsim_{3}^{**},\succsim_{4}', \succsim_{5}^{*}, \dots , \succsim_{k-1}^{*},\succsim_{k+1}^{*}, \dots, \succsim_{N}^{*})$. Let $\succsim_{1}^{**}$ be the same as $\succsim_{1}^{*}$, except that it top-ranks $k$ and let $\succsim_{k}^{**}$ be the same as $\succsim_{k}^{*}$, except that it top-ranks $1$. We must have that $1$ and $k$ are matched in the marginal mechanism at  $(\succsim_{1}^{**},\succsim_{k}^{**})$, since otherwise Maskin monotonicity  says that $f$ gives the same result as though they had announced $(\succsim_{1}^{*},\succsim_{k}^{*})$, but in this case, $1$ and $2$ are matched and $3$ and $k$ are matched which is inefficient since we could swap $1$ and $3$'s  matches. Thus $(k,1)$ is in $I^{1,k}(\succsim_{2}^{**},\succsim_{3}^{**},\succsim_{4}', \succsim_{5}^{*}, \dots , \succsim_{k-1}^{*},\succsim_{k+1}^{*}, \dots, \succsim_{N}^{*})$. From the considerations above, there are four possibilities for this mechanism. However, since both $(2,3)$ and $(k,1)$ are in the marginal option set, the marginal mechanism is not constant. Note also that if $1$ top ranks $3$ and $k$ announces $\succsim_{k}^{*}$, then by equation \ref{eq10}, $1$ and $3$ are matched. Thus it is possible for both $1$ and $k$ to match with $3$ in this marginal mechanism. But since both can't match with $3$ at the same time, the marginal constraint is like the one shown on the right of figure \ref{roommates_1}, and there must be a single dictator. We will show that this dictator must be $1$. To do this, we will have to take a detour to the $3,k$-marginal mechanism.

By equation \ref{eq10}, $f_{1}(\succsim_{1}^{*},\succsim_{2}^{**},\succsim_{3}',\succsim_{4}',\succsim_{5}^{*},\dots, \succsim_{N}^{*})=2$ for all $\succsim_{3}',\succsim_{4}'$, so by Maskin monotonicity, we have $$f(\succsim_{1}^{*},\succsim_{2}^{**},\succsim_{3}',\succsim_{4}',\succsim_{5}^{*},\dots, \succsim_{N}^{*})=f(\succsim_{1}^{*},\succsim_{2}^{*},\succsim_{3}',\succsim_{4}',\succsim_{5}^{*},\dots, \succsim_{N}^{*})$$ for all $\succsim_{3}',\succsim_{4}'$. In particular, we have $g_{3}(\succsim_{1}^{*},\succsim_{2}^{**},\succsim_{4}',\succsim_{5}^{*},\dots, \succsim_{N}^{*})= \{4,\dots, N\}$ for all $\succsim_{4}'$ by equation \ref{eq3}. Consider the $3,k$-marginal mechanism at this profile.  If $3$ top ranks $k$ they are matched. If $3$ top ranks $4$ they are not. In the latter case, $k$ is matched to someone from $\{5,\dots, N\}$, which she prefers. Hence the marginal mechanism is either a dictatorship with $3$ as the dictator, or it is of the third type in which $3$ and $k$ are matched if either top-ranks the other. Let $\succsim_{3}''$ top rank $4$ and $\succsim_{k}''$ top rank $3$. In the type (3) marginal mechanism, we would have $3$ and $k$ matched in $$f(\succsim_{1}^{*},\succsim_{2}^{**},\succsim_{3}'',\succsim_{4}', \succsim_{5}^{*}, \dots \succsim_{k-1}^{*}, \succsim_{k}'',\succsim_{k+1}^{*}, \dots, \succsim_{N}^{*})$$
but then Maskin-monotonicity would imply that we get the same outcome if $2$ announced $\succsim_{2}^{*}$, yet at this profile, by equation \ref{eq3}, we would have $3$ matched to $4$. Hence we have that $3$ is the dictator in the $3,k$-marginal mechanism at $(\succsim_{1}^{*},\succsim_{2}^{**},\succsim_{4}',\succsim_{5}^{*},\dots \succsim_{k-1}^{*}, \succsim_{k+1}^{*}, \dots, \succsim_{N}^{*})$ for all $\succsim_{4}'$. This implies that $g_{3}(\succsim_{1}^{*},\succsim_{2}^{**},\succsim_{4}',\succsim_{5}^{*},\dots , \succsim_{k-1}^{*},\succsim_{k}', \succsim_{k+1}^{*}, \dots, \succsim_{N}^{*})= \{4,\dots, N\}$ for all $\succsim_{4}'$ and $\succsim_{k}'$. So we have $f_{3}(\succsim_{1}^{*},\succsim_{2}^{**},\succsim_{3}^{**},\succsim_{4}',\succsim_{5}^{*},\dots , \succsim_{k-1}^{*},\succsim_{k}^{**}, \succsim_{k+1}^{*}, \dots, \succsim_{N}^{*})=k$, and by nonbossiness
\begin{equation*}
f(\succsim_{1}^{*},\succsim_{2}^{**},\succsim_{3}^{**},\succsim_{4}',\succsim_{5}^{*},\dots , \succsim_{k-1}^{*},\succsim_{k}^{**}, \succsim_{k+1}^{*}, \dots, \succsim_{N}^{*})= \\
f(\succsim_{1}^{*},\succsim_{2}^{**},\succsim_{3}^{**},\succsim_{4}',\succsim_{5}^{*},\dots , \succsim_{k-1}^{*},\succsim_{k}^{*}, \succsim_{k+1}^{*}, \dots, \succsim_{N}^{*})
\end{equation*}
and on the right hand side we know that $1$ and $2$ are matched and $3$ and $k$ are matched. This implies that if $k$ switches from $\succsim_{k}^{*}$ to $\succsim_{k}^{**}$, $1$ and $k$ are not matched in the $1,k$-marginal mechanism at $(\succsim_{2}^{**},\succsim_{3}^{**},\succsim_{4}', \succsim_{5}^{*}, \dots , \succsim_{k-1}^{*},\succsim_{k+1}^{*}, \dots, \succsim_{N}^{*})$. Since either $1$ or $k$ must be the dictator in their marginal mechanism by earlier arguments, it must be $1$ and we have 
\begin{equation*}
    k\in g_{1}(\succsim_{2}^{**},\succsim_{3}^{**},\succsim_{4}', \succsim_{5}^{*}, \dots , \succsim_{N}^{*})
\end{equation*}
\noindent and since $2,3,4$ can't affect $1$'s option set we get 
\begin{equation*}
    k\in g_{1}(\succsim_{2}',\succsim_{3}',\succsim_{4}', \succsim_{5}^{*}, \dots , \succsim_{k-1}^{*},\succsim_{k+1}^{*}, \dots, \succsim_{N}^{*})
\end{equation*}
for all $\succsim_{2}',\succsim_{3}',\succsim_{4}'$. Since $k$ was arbitrary, together with equation \ref{eq10}, we have
\begin{equation}\label{eq11}
    g_{1}(\succsim_{2}',\succsim_{3}',\succsim_{4}', \succsim_{5}^{*}, \dots , \succsim_{N}^{*})=N-\{1\}
\end{equation}
for all $\succsim_{2}',\succsim_{3}',\succsim_{4}'$. This, however, gets us back to the first case since $1$ is the marginal dictator among $\{1,2,3,4\}$ at $\succsim_{5}^{*},\dots, \succsim_{N}^{*}$ and if $\succsim_{3}^{\dagger}\in P^{\uparrow}(4)$, $\succsim_{4}^{\dagger}\in P^{\uparrow}(3)$, then we must have a marginal dictator among $\{1,2,5,\dots, N\}$, however the only marginal dictator consistent with equation \ref{eq11} is agent $1$.
\qed

{\footnotesize
\bibliography{gspbib}
\bibliographystyle{econometrica}
}

\end{document}